\newtheorem {theorem}{Theorem}
\Crefname{figure}{Fig.}{Figs.}%
\begin{document}
%
\title{The basic equation for target detection in remote sensing}
%
%
%

\author{Xiurui~Geng,~Luyan~Ji~and~Yongchao Zhao
	\IEEEcompsocitemizethanks{\IEEEcompsocthanksitem X. Geng and Y. Zhao are with the Key Laboratory of Technology in Geo-Spatial information Processing and Application System, Institute of Electronics, Chinese Academy of Sciences, Beijing 100190, China; School of Electronic, Electrical and Communication Engineering, University of Chinese Academy of Sciences, 100049, China \protect\\
		E-mail: gengxr@sina.com
		\IEEEcompsocthanksitem L. Ji is with the Ministry of Education Key Laboratory for Earth System Modeling, Department of Earth System Science, Tsinghua University.}
	\thanks{ Manuscript received ??, ??; revised ??, ??.}}

%
%

\markboth{Journal of ??,~Vol.~??, No.~?, ?~?}%
{Shell \MakeLowercase{\textit{et al.}}: Bare Demo of IEEEtran.cls for IEEE Journals}
%



\maketitle

\begin{abstract}
Our research has revealed a hidden relationship among several basic components, which leads to  the best target detection result. Further, we have proved that the matched filter (MF) is always superior to the constrained energy minimization (CEM) operator, both of which were originally of  parallel importance in the field of target detection for remotely sensed image.
\end{abstract}

\begin{IEEEkeywords}
Target detection, basic equation, constrained energy minimization, matched filter, remote sensing.
\end{IEEEkeywords}

%
\IEEEpeerreviewmaketitle

\section{Introduction}
%
%
%
%
\IEEEPARstart{T}{arget} detection is an important research area in the applications of remote sensing, which aims to find an effective detector that can enhance a signal output while suppressing the background. The design of the target detection algorithm must be based on a sufficient understanding of the data. Different criteria will lead to different types of target detection algorithm. Energy and maximum likelihood criteria are the most commonly used \textcolor{black}{methods} for target detection, and they are represented by constrained energy minimization (CEM) \cite{Joseph1993Detection,FARRAND199764,843007,1295199,Liu1999Generalized,Lin2010721} and match filter (MF) \cite{Manolakis2003Hyperspectral,974724,4217133}. CEM \textcolor{black}{was} originally derived from a linearly constrained minimum variance adaptive beam-forming in the field of signal processing. It keeps the output \textcolor{black}{of the CEM detector for the target} as a constant while suppressing output energy \textcolor{black}{in} the background to a minimum level. CEM is widely used for target detection\cite{FARRAND199764, Alam2008Mine,Monaco2014High,5604768} and has been embedded in the most frequently used remote sensing software, \textcolor{black}{(the Environment for Visualizing Images (ENVI))} since Version 4.6. On the other hand, the MF detector, (the minimum variance distortionless response (MVDR) \cite{1449208,1449755,665}) in array processing, is the most commonly utilized technique for various kinds of applications\cite{ParkerWilliams2002446,7071091,1511829,6522163}, and has been included in \textcolor{black}{the} EVNI \textcolor{black}{software in} a very early version. MF is at optimal performance in the Neyman-Pearson sense only when the target and background classes follow multivariate normal distributions with the same covariance matrix, which is an unlikely situation for real-world images.

CEM and MF detectors are designed based on entirely different theoretical foundations. It is interesting to find that their mathematical forms are very similar, except that the MF detector needs the data to be centralized first. The result of CEM can be considered as the inner product of a detector vector and all pixel vectors. Since the inner product is sensitive to the origin position, the performance of CEM will vary with different origins. Yet, as \textcolor{black}{is} known, the data distribution is independent of the selection of origin. From this angle, the CEM \textcolor{black}{is deficient}. MF considers the mean vector as origin, which ``accidentally'' avoids the influence of the origin position on the target detection performance. Yet, is the mean vector the optimal data origin? Do better origins \textcolor{black}{exist other} than zero and mean vector? Clever eye (CE) \cite{Geng201632} provides a strategy to find the best origins, which minimizes the output energy of the background while maintaining a constant response to the signature of interest. Since this is also based on the energy criterion, it can be considered as an optimized version of CEM. 

In this paper, we prove that in order to achieve the optimal target detection performance, the optimal origins must satisfy a basic equation. This basic equation contains four components in target detection: target vector,  mean vector, covariance matrix and origin. Based on this equation, we further prove that MF is always superior to CEM. That is to say, of the two benchmark target detection methods, CEM can now be considered obsolete.

\section{Background} 
In this section, we briefly introduce the formation of \textcolor{black}{the} CEM, MF and CE detectors, and then present the main problem \textcolor{black}{which exists} in CE.

\textcolor{black}{\subsection{CEM}}

CEM \textcolor{black}{was} proposed by Harsanyi in 1993, \textcolor{black}{which exists} originally derived from the linearly constrained minimized variance adaptive beam-forming in the field of digital signal processing. It uses a finite impulse response (FIR) filter to constrain the desired signature by a specific gain while minimizing the filter output energy \cite{Joseph1993Detection}. 

Assume that the observed data matrix is given by $ \mathbf{X}= \left[\mathbf{r}_1,\mathbf{r}_2,\dots,\mathbf{r}_N \right] $ , where  $\mathbf{r}_i=\left[r_{i1},r_{i2},\dots,r_{iL}\right]^T $  for $ 1\le i \le N $ is a sample pixel vector, $ N $ is the total number of pixels, and $ L $ is the number of bands. Suppose that the desired signature $ \mathbf{d} $ is also known. The objective of CEM is to design an FIR linear filter $\mathbf{w}=\left[w_{1},w_{2},\dots,w_{L}\right]^T $ to minimize the filter output \textcolor{black}{energy} subject to the constraint, $ \mathbf{d}^T\mathbf{w}=\sum_{l=1}^{L}d_lw_l=1 $. Then the problem yields 

\begin{equation}
\left\{
\begin{aligned}
&\min \limits_{\mathbf{w}}\frac{1}{N}\left(\sum_{i=1}^{N}y_i^2\right)=\min \limits_{\mathbf{w}}\mathbf{w}^T\mathbf{R}\mathbf{w} \\
&\mathbf{d}^T\mathbf{w}=1 
\end{aligned}\right.,
\label{CEM_obj}
\end{equation}
where $ y_i=\mathbf{w}^T\mathbf{r}_i $ \textcolor{black}{is the output of CEM for the pixel $ \mathbf{r}_i $}, and $ \mathbf{R}=\left(\sum_{i=1}^{N}\mathbf{r}_i\mathbf{r}_i^T\right)/N $, which is firstly referred to as sample correlation matrix by Harsanyi \cite{Joseph1993Detection,FARRAND199764}, and later renamed autocorrelation matrix by some other researchers \cite{Liu1999Generalized,Chang1999Generalized, 1295199, 843007}. In this paper, we will adopt Harsanyi's nomination. The solution to this constrained minimization problem (\ref{CEM_obj}) is the CEM operator, $ \mathbf{w}_{CEM} $, which is given by \cite{Joseph1993Detection}
\begin{equation}
\mathbf{w}_{CEM}=\frac{\mathbf{R}^{-1}\mathbf{d}}{\mathbf{d}^T\mathbf{R}^{-1}\mathbf{d}}.
\label{w_CEM}
\end{equation}


\textcolor{black}{\subsection{MF}}

\textcolor{black}{Though MF stems from the binary hypothesis test, it can also be explained from the perspective of the filter output energy. That is, the MF detector can be considered as the optimal solution of target detection when the data origin is positioned at the mean vector.} The normalized expression of an MF detector can be written as \cite{Manolakis2003Hyperspectral,1295218}

\begin{equation}
\mathbf{w}_{MF}=c_{MF}\mathbf{K}^{-1}\left(\mathbf{d}-\mathbf{m}\right)=\frac{\mathbf{K}^{-1}\left(\mathbf{d}-\mathbf{m}\right)}{\left(\mathbf{d}-\mathbf{m}\right)^T\mathbf{K}^{-1}\left(\mathbf{d}-\mathbf{m}\right)}
\label{w_MF}
\end{equation}
where $ \mathbf{m}=\left(\sum_{i=1}^{N}\mathbf{r}_i\right)/N $  is the mean vector, $ \mathbf{K}=\left[\sum_{i=1}^{N}\left(\mathbf{r}_i-\mathbf{m}\right)\left(\mathbf{r}_i-\mathbf{m}\right)^T\right]/N $  is the covariance matrix, $ c_{MF}= 1/\left[\left(\mathbf{d}-\mathbf{m}\right)^T\mathbf{K}^{-1}\left(\mathbf{d}-\mathbf{m}\right)\right] $ is a scalar.


~

\subsection{CE}
CEM and MF have the same expression except that MF \textcolor{black}{first} requires data centralization. However, this difference leads to a difference in the target detection results. That is to say, the selection of the data origin could directly affect the performance of the detector. CE introduces the data origin as \textcolor{black}{a new} variable in the objective function, \textcolor{black}{and the corresponding optimization model becomes
\begin{equation}
	\left\{
	\begin{aligned}
	&\min \limits_{\mathbf{w},\boldsymbol{\muup}}\mathbf{w}^T\mathbf{R}_{\boldsymbol{\muup}}\mathbf{w} \\
	&\left(\mathbf{d}-\boldsymbol{\muup}\right)^T\mathbf{w}=1 
	\end{aligned}\right.,
	\label{CE_obj}
\end{equation}
where
\begin{equation}
	\mathbf{R}_{\boldsymbol{\muup}}=\frac{1}{N}\left[\sum_{i=1}^{N}\left(\mathbf{r}_i-\boldsymbol{\muup}\right)\left(\mathbf{r}_i-\boldsymbol{\muup}\right)^T\right]=\mathbf{K}+\left(\mathbf{m}-\boldsymbol{\muup}\right)\left(\mathbf{m}-\boldsymbol{\muup}\right)^T.
	\label{Ru}
\end{equation} Apparently, when $ \mathbf{m} = \boldsymbol{\muup}$, $ \mathbf{R}_{\boldsymbol{\muup}} $ is equal to the covariance matrix $ \mathbf{K} $. And when $ \mathbf{m} =\mathbf{0}$, $ \mathbf{R}_{\boldsymbol{\muup}} $ is equal to the sample correlation matrix $ \mathbf{R} $.}

\textcolor{black}{Then, similar to CEM, the corresponding detector can be calculated by firstly fixing $ \boldsymbol{\muup} $ as follows:
\begin{equation}
	\mathbf{w}_{\boldsymbol{\muup}}=\frac{\mathbf{R}_{\boldsymbol{\muup}}^{-1}\left(\mathbf{d}-\boldsymbol{\muup}\right)}{\left(\mathbf{d}-\boldsymbol{\muup}\right)^T\mathbf{R}_{\boldsymbol{\muup}}^{-1}\left(\mathbf{d}-\boldsymbol{\muup}\right).}
	\label{w_ce}
\end{equation}}

\textcolor{black}{From (\ref{w_ce}) we can see that a different $ \boldsymbol{\muup} $ can lead to a different detector. It is interesting to find that when $ \boldsymbol{\muup} = \mathbf{m}$, $ \mathbf{w}_{\boldsymbol{\muup}}=\mathbf{w}_{MF} $, and when $ \boldsymbol{\muup} = \mathbf{0}$, $ \mathbf{w}_{\boldsymbol{\muup}}=\mathbf{w}_{CEM} $. However, both of these may not be the optimal solution for (\ref{CE_obj}).}

\textcolor{black}{Substitute (\ref{w_ce}) into the cost function of (\ref{CE_obj}), and the average filter output energy of CE becomes
\begin{equation}
	f\left(\boldsymbol{\muup}\right)=\frac{1}{\left(\mathbf{d}-\boldsymbol{\muup}\right)^T\mathbf{R}_{\boldsymbol{\muup}}^{-1}\left(\mathbf{d}-\boldsymbol{\muup}\right)}.
	\label{ce_f}
\end{equation}}

\textcolor{black}{Note that the numerator of (\ref{ce_f}) is a constant, so minimizing $ f\left(\boldsymbol{\muup}\right) $ is equivalent to maximizing the denominator. We denote the denominator as $  g\left(\boldsymbol{\muup}\right) $, and then the original optimization problem (\ref{CE_obj}) can be transformed into \cite{Geng201632}}
\begin{equation}
\max \limits_{\boldsymbol{\muup}}g\left(\boldsymbol{\muup}\right)=\left(\mathbf{d}-\boldsymbol{\muup}\right)^T\mathbf{R}_{\boldsymbol{\muup}}^{-1}\left(\mathbf{d}-\boldsymbol{\muup}\right).
\label{g_mu}
\end{equation}
 
It should be noted here that $ g \left(\boldsymbol{\muup}\right) $ corresponds to the reciprocal of the \textcolor{black}{average filter output energy. Since the filter output of the target is fixed to 1, the minimization on the average output energy is equivalent to the minimization on the background output energy. As a result, the} larger the $ g \left(\boldsymbol{\muup}\right) $, the smaller the \textcolor{black}{background} output energy, and thus the better the detector performance \textcolor{black}{\cite{Geng201632}}. \textcolor{black}{Table \ref{Tablecemcemf} tabulates the expressions of the detectors, the filter output of the target, and the corresponding output energy for the three target detectors. It is easy to find their similarity in the mathematical expression. And the only difference is the selection of the data origin. It should be noted here that, when we describe a detector as 'best/optimal' or 'superior' in this study, it means that the filter output energy of the detector is minimum or smaller.}

According to (\ref{g_mu}), $ g \left(\boldsymbol{\muup}\right) $ varies with the origin position $ \boldsymbol{\muup} $, which indicates the target detection result is influenced by the selection of origin. \textcolor{black}{In Ref \cite{Geng201632}, the optimal origin  $ \boldsymbol{\muup}^* $, is solved by applying the gradient ascent method to (\ref{g_mu}). Then the CE detector, $ \mathbf{w}_{\boldsymbol{\muup}^*} $ is calculated accordingly.} The derivative of $ g(\boldsymbol{\muup}) $ with respect to $ \boldsymbol{\muup} $ is expressed as
	\begin{equation}
	\begin{split}
	&g'\left(\boldsymbol{\muup}\right)=-2\mathbf{K}^{-1}\left(\mathbf{d}-\boldsymbol{\muup}\right)-\\
	&\frac{2\left(\mathbf{d}-\boldsymbol{\muup}\right)^T\mathbf{K}^{-1}\left(\mathbf{m}-\boldsymbol{\muup}\right)\left(1+\left(\mathbf{m}-\boldsymbol{\muup}\right)^T\mathbf{K}^{-1}\left(\mathbf{m}-\boldsymbol{\muup}\right)\right)\mathbf{K}^{-1}\left(-\mathbf{m}+2\boldsymbol{\muup}-\mathbf{d}\right)}{\left(1+\left(\mathbf{m}-\boldsymbol{\muup}\right)^T\mathbf{K}^{-1}\left(\mathbf{m}-\boldsymbol{\muup}\right)\right)^2}\\
	&+\frac{2\left(\left(\mathbf{d}-\boldsymbol{\muup}\right)^T\mathbf{K}\left(\mathbf{m}-\boldsymbol{\muup}\right)\right)^2\mathbf{K}^{-1}\left(\mathbf{m}-\boldsymbol{\muup}\right)}{\left(1+\left(\mathbf{m}-\boldsymbol{\muup}\right)^T\mathbf{K}^{-1}\left(\mathbf{m}-\boldsymbol{\muup}\right)\right)^2}.
	\end{split}
	\end{equation}

 However, $ g\left(\boldsymbol{\muup}\right) $ is not a convex function, so the local extreme problem seems inevitable for CE. One possible way to determine the global maximum of $ g\left(\boldsymbol{\muup}\right) $ is by finding out all local maxima, which is apparently an impossible task in practice.

\begin{table}  
\centering	
\caption{The detector, output of target and the average output energy of CEM, MF and CE}
\label{Tablecemcemf} 
	\begin{tabular}{ccccc}  
		\hline  
		Method & Detector &Output of target& Average output energy \\  
		\hline  
		MF  &$\frac{\mathbf{K}^{-1}\left(\mathbf{d}-\mathbf{m}\right)}{\left(\mathbf{d}-\mathbf{m}\right)^T\mathbf{K}^{-1}\left(\mathbf{d}-\mathbf{m}\right)}$ &1  & $\frac{1}{\left(\mathbf{d}-\mathbf{m}\right)^T\mathbf{K}^{-1}\left(\mathbf{d}-\mathbf{m}\right)}$\\
		CEM &$ \frac{\mathbf{R}^{-1}\mathbf{d}}{\mathbf{d}^T\mathbf{R}^{-1}\mathbf{d}} $ &1 & $ \frac{1}{\mathbf{d}^T\mathbf{R}^{-1}\mathbf{d}} $\\
		CE &$ \frac{\mathbf{R}_{\boldsymbol{\muup}}^{-1}\left(\mathbf{d}-\boldsymbol{\muup}\right)}{\left(\mathbf{d}^T-\boldsymbol{\muup}\right)\mathbf{R}_{\boldsymbol{\muup}}^{-1}\left(\mathbf{d}-\boldsymbol{\muup}\right)} $ &1&$ \frac{1}{\left(\mathbf{d}^T-\boldsymbol{\muup}\right)\mathbf{R}_{\boldsymbol{\muup}}^{-1}\left(\mathbf{d}-\boldsymbol{\muup}\right)} $ \\   
		\hline  
	\end{tabular}
\end{table}

\section{Basic equation for target detection}
\subsection{Analytical solution of CE}
The CEM and MF detectors have a similar expression in mathematics, but their theoretical foundations are completely different. As a result, it is hard to theoretically compare their advantages and disadvantages. The \textcolor{black}{CE} algorithm \textcolor{black}{can be regarded as the bridge} between CEM and MF.

 Fortunately, we find that all local maximal points of $ g\left(\boldsymbol{\muup}\right) $ are mathematically equivalent to the solution of a system of linear equations, which can be stated as the following theorem.

\begin{theorem}
	Assume that the target vector $ \mathbf{d} $ is known. To achieve the optimal target detector from the perspective of output energy of the detector, the target vector, $ \mathbf{d} $, data mean vector, $ \mathbf{m} $, the data covariance matrix, $ \mathbf{K} $ and origin, $ \boldsymbol{\muup} $ must satisfy
	\begin{equation}
		\left(\mathbf{d}-\mathbf{m}\right)^T\mathbf{K}^{-1}\left(\mathbf{m}-\boldsymbol{\muup}\right)=1.
		\label{basicEq1}
	\end{equation}
	\label{T1}
\end{theorem}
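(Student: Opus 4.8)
The plan is to recast the maximization of $g(\boldsymbol{\muup})$ in (\ref{g_mu}) into a transparent scalar form by exploiting the rank-one structure of $\mathbf{R}_{\boldsymbol{\muup}}$. From (\ref{Ru}), $\mathbf{R}_{\boldsymbol{\muup}}=\mathbf{K}+(\mathbf{m}-\boldsymbol{\muup})(\mathbf{m}-\boldsymbol{\muup})^T$ is a rank-one update of the invertible, positive-definite covariance matrix $\mathbf{K}$, so I would first apply the Sherman--Morrison identity to write $\mathbf{R}_{\boldsymbol{\muup}}^{-1}$ using only $\mathbf{K}^{-1}$. Substituting this into $g(\boldsymbol{\muup})=(\mathbf{d}-\boldsymbol{\muup})^T\mathbf{R}_{\boldsymbol{\muup}}^{-1}(\mathbf{d}-\boldsymbol{\muup})$, and splitting $\mathbf{d}-\boldsymbol{\muup}=(\mathbf{d}-\mathbf{m})+(\mathbf{m}-\boldsymbol{\muup})$ so as to isolate the fixed part $\mathbf{d}-\mathbf{m}$ from the free part $\mathbf{m}-\boldsymbol{\muup}$, the whole quadratic form collapses onto three scalars,
\begin{equation}
\begin{aligned}
\alpha&=(\mathbf{d}-\mathbf{m})^T\mathbf{K}^{-1}(\mathbf{d}-\mathbf{m}),\\
\beta&=(\mathbf{d}-\mathbf{m})^T\mathbf{K}^{-1}(\mathbf{m}-\boldsymbol{\muup}),\\
\gamma&=(\mathbf{m}-\boldsymbol{\muup})^T\mathbf{K}^{-1}(\mathbf{m}-\boldsymbol{\muup}),
\end{aligned}
\end{equation}
in terms of which $g(\boldsymbol{\muup})=\alpha+2\beta+\gamma-(\beta+\gamma)^2/(1+\gamma)$, where $\alpha$ is constant in $\boldsymbol{\muup}$ while $\beta$ is linear and $\gamma$ quadratic.

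The decisive step is to complete the square in the numerator. Using the identity $2\beta+\gamma-\beta^2=(1+\gamma)-(1-\beta)^2$, everything telescopes to
\begin{equation}
g(\boldsymbol{\muup})=(\alpha+1)-\frac{(1-\beta)^2}{1+\gamma}.
\label{eq:gclean_prop}
\end{equation}
Since $\mathbf{K}$ is positive definite we have $\gamma\ge 0$, hence $1+\gamma>0$, and $(1-\beta)^2\ge 0$; the subtracted term is therefore nonnegative, so $g(\boldsymbol{\muup})\le\alpha+1$ for every origin, with equality if and only if $\beta=1$. Written out, $\beta=1$ is exactly $(\mathbf{d}-\mathbf{m})^T\mathbf{K}^{-1}(\mathbf{m}-\boldsymbol{\muup})=1$, i.e.\ the basic equation (\ref{basicEq1}). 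Because $\mathbf{d}\neq\mathbf{m}$, the linear functional $\boldsymbol{\muup}\mapsto\beta$ takes every real value, so $\beta=1$ is attainable and the bound $\alpha+1$ is genuinely the global maximum; the optimal origins fill the entire hyperplane (\ref{basicEq1}).

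I expect the only real difficulty to be recognizing the completing-the-square identity that produces (\ref{eq:gclean_prop}); the rest is routine expansion. As a cross-check that avoids this trick, one may instead impose stationarity $g'(\boldsymbol{\muup})=\mathbf{0}$ directly: the same Sherman--Morrison substitution factors the gradient as $(1-\beta)\,\mathbf{K}^{-1}\!\left[(1+\gamma)(\mathbf{d}-\mathbf{m})+(1-\beta)(\mathbf{m}-\boldsymbol{\muup})\right]=\mathbf{0}$. Its only roots are $\beta=1$ (the basic equation) and the degenerate origin $\boldsymbol{\muup}=\mathbf{d}$, at which $g=0$ is the global minimum rather than a maximum. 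This independently shows that every genuine maximizer of (\ref{g_mu}) satisfies (\ref{basicEq1}), and it also dispels the local-maximum worry noted after (\ref{g_mu}), since (\ref{eq:gclean_prop}) exhibits a single maximal value attained on a hyperplane.
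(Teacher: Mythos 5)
Your proof is correct, but it takes a genuinely different route from the paper's. The paper proves the theorem by differentiating $g(\boldsymbol{\muup})$, factoring the gradient, and showing that the stationarity condition $g'(\boldsymbol{\muup})=\mathbf{0}$ forces either the basic equation or the degenerate minimizer $\boldsymbol{\muup}=\mathbf{d}$; this is essentially your ``cross-check,'' except that the paper carries it out in the original vector variables, which makes the factorization considerably messier. Your primary argument is purely algebraic: after Sherman--Morrison and the split $\mathbf{d}-\boldsymbol{\muup}=(\mathbf{d}-\mathbf{m})+(\mathbf{m}-\boldsymbol{\muup})$, the identity
\begin{equation}
g(\boldsymbol{\muup})=1+(\mathbf{d}-\mathbf{m})^T\mathbf{K}^{-1}(\mathbf{d}-\mathbf{m})-\frac{\left(1-\beta\right)^2}{1+\gamma}
\end{equation}
(which I verified: $(2\beta+\gamma)(1+\gamma)-(\beta+\gamma)^2=2\beta+\gamma-\beta^2=(1+\gamma)-(1-\beta)^2$) gives a global upper bound attained exactly on the hyperplane $\beta=1$. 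This buys you strictly more than the paper's argument: it identifies the maximal value $1+(\mathbf{d}-\mathbf{m})^T\mathbf{K}^{-1}(\mathbf{d}-\mathbf{m})$ in closed form, it proves \emph{global} rather than merely local optimality, it settles at once the local-extremum worry the paper raises after (\ref{g_mu}), and it already contains the seed of Theorem \ref{T2}, since the optimal output energy visibly coincides with that of MF. The only hypotheses you lean on --- $\mathbf{K}\succ 0$ and $\mathbf{d}\neq\mathbf{m}$ (needed so that $\boldsymbol{\muup}\mapsto\beta$ is a nonzero linear functional and the bound is attained) --- are implicit in the paper's setup as well.
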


See Appendix A for details about the proof. (\ref{basicEq1}) is an interesting equation. The general solution set of $ \boldsymbol{\muup} $ in (\ref{basicEq1}) corresponds to all the local maxima of (\ref{g_mu}). Therefore, the optimal origin, $ \boldsymbol{\muup}^* $ can be directly determined by the solution of (\ref{basicEq1}), instead of applying the gradient ascent method as in Ref \cite{Geng201632}. (\ref{basicEq1}) has an infinite number of solutions, with the dimensionality of a solution set equal to the number of bands minus 1. By inspection, it can be found that (\ref{basicEq1}) includes the basic elements in target detection: target, mean vector, covariance matrix and origin. As a result, (\ref{basicEq1})  is named the basic equation of target detection, and also the clever eye (CE) equation.

\subsection{Equivalence between CE and MF}
Theorem \ref{T1} indicates that the optimal origins can be acquired by solving (\ref{basicEq1}). Based on this fact, we further find that though (\ref{basicEq1}) has an infinite number of solutions, they all correspond to the same target detector, which is equivalent to the MF detector. We are now about to establish the equivalence between CE and MF detector.

\begin{theorem}
	The CE detector is equivalent to the MF detector. That is, for any solution of \emph{(\ref{basicEq1})}, $ \boldsymbol{\muup}^* $, there exists a constant $ c $ such that 
	\begin{equation}
		\mathbf{R}_{\boldsymbol{\muup}^*}^{-1}\left(\mathbf{d}-\boldsymbol{\muup}^{*}\right)=c\mathbf{K}^{-1}\left(\mathbf{d}-\mathbf{m}\right)
	\end{equation}
	\label{T2}
\end{theorem}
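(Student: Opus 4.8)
The plan is to compute the left-hand side $\mathbf{R}_{\boldsymbol{\muup}^*}^{-1}(\mathbf{d}-\boldsymbol{\muup}^*)$ directly and to show that the basic equation (\ref{basicEq1}) forces every correction term arising from the $\boldsymbol{\muup}$-dependence to cancel, leaving exactly $\mathbf{K}^{-1}(\mathbf{d}-\mathbf{m})$, i.e.\ the claim with $c=1$. The engine of the whole argument is the Sherman--Morrison rank-one inverse formula, since (\ref{Ru}) exhibits $\mathbf{R}_{\boldsymbol{\muup}^*}$ as a rank-one update of $\mathbf{K}$.

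First I would set $\mathbf{u}=\mathbf{m}-\boldsymbol{\muup}^*$ so that $\mathbf{R}_{\boldsymbol{\muup}^*}=\mathbf{K}+\mathbf{u}\mathbf{u}^T$ by (\ref{Ru}). The Sherman--Morrison identity then gives
\[
\mathbf{R}_{\boldsymbol{\muup}^*}^{-1}=\mathbf{K}^{-1}-\frac{\mathbf{K}^{-1}\mathbf{u}\mathbf{u}^T\mathbf{K}^{-1}}{1+\mathbf{u}^T\mathbf{K}^{-1}\mathbf{u}}.
\]
Next I would decompose the argument as $\mathbf{d}-\boldsymbol{\muup}^*=(\mathbf{d}-\mathbf{m})+\mathbf{u}$ and apply this inverse. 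Introducing the two scalars $\alpha=\mathbf{u}^T\mathbf{K}^{-1}(\mathbf{d}-\mathbf{m})$ and $\beta=\mathbf{u}^T\mathbf{K}^{-1}\mathbf{u}$, a short manipulation collects the coefficient of $\mathbf{K}^{-1}\mathbf{u}$ into $(1-\alpha)/(1+\beta)$, yielding
\[
\mathbf{R}_{\boldsymbol{\muup}^*}^{-1}(\mathbf{d}-\boldsymbol{\muup}^*)=\mathbf{K}^{-1}(\mathbf{d}-\mathbf{m})+\frac{1-\alpha}{1+\beta}\,\mathbf{K}^{-1}\mathbf{u}.
\]

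The key step is to invoke the basic equation. Because $\mathbf{K}^{-1}$ is symmetric, $\alpha=\mathbf{u}^T\mathbf{K}^{-1}(\mathbf{d}-\mathbf{m})=(\mathbf{d}-\mathbf{m})^T\mathbf{K}^{-1}(\mathbf{m}-\boldsymbol{\muup}^*)$, which is precisely the left-hand side of (\ref{basicEq1}); hence $\alpha=1$ for every solution $\boldsymbol{\muup}^*$. The correction term $\tfrac{1-\alpha}{1+\beta}\mathbf{K}^{-1}\mathbf{u}$ therefore vanishes identically, and the expression collapses to $\mathbf{R}_{\boldsymbol{\muup}^*}^{-1}(\mathbf{d}-\boldsymbol{\muup}^*)=\mathbf{K}^{-1}(\mathbf{d}-\mathbf{m})$, establishing the theorem with $c=1$.

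I do not expect a genuine obstacle: once the Sherman--Morrison structure is recognized the computation is mechanical. The only points demanding care are the sign bookkeeping in the splitting $\mathbf{d}-\boldsymbol{\muup}^*=(\mathbf{d}-\mathbf{m})+\mathbf{u}$ and the correct identification of $\alpha$ with the inner product appearing in (\ref{basicEq1}); both are routine. It is worth emphasizing that the constant comes out to be exactly $c=1$ independently of which solution of (\ref{basicEq1}) is used, so the CE and MF detectors do not merely share a direction but coincide verbatim at every optimal origin, which is what makes the subsequent comparison with CEM clean.
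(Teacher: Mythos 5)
Your proof is correct, and it is a genuinely more direct route than the one in the paper. Both arguments hinge on the Sherman--Morrison inversion of the rank-one update $\mathbf{R}_{\boldsymbol{\muup}^*}=\mathbf{K}+\mathbf{u}\mathbf{u}^T$ with $\mathbf{u}=\mathbf{m}-\boldsymbol{\muup}^*$, but after that they diverge. The paper substitutes the Sherman--Morrison expression into the claimed identity, then brings in the full stationarity condition $g'\left(\boldsymbol{\muup}\right)=\mathbf{0}$ from (\ref{g_dev}) to produce a second expression for $\mathbf{K}^{-1}\left(\mathbf{d}-\boldsymbol{\muup}\right)$, and matches the two, only invoking Theorem \ref{T1} at the last step; this is a verification that works backward from the conclusion and drags the unreduced gradient through the computation. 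You instead compute the left-hand side forward: the decomposition $\mathbf{d}-\boldsymbol{\muup}^*=\left(\mathbf{d}-\mathbf{m}\right)+\mathbf{u}$ collects everything into $\mathbf{K}^{-1}\left(\mathbf{d}-\mathbf{m}\right)+\frac{1-\alpha}{1+\beta}\mathbf{K}^{-1}\mathbf{u}$, and the hypothesis of the theorem --- that $\boldsymbol{\muup}^*$ solves (\ref{basicEq1}), i.e.\ $\alpha=1$ --- kills the correction term outright. This is both shorter and logically cleaner: since the theorem is stated for an arbitrary solution of (\ref{basicEq1}), the basic equation is the only hypothesis you are entitled to use, and your proof uses exactly that and nothing more, whereas the paper's detour through $g'\left(\boldsymbol{\muup}\right)=\mathbf{0}$ is superfluous. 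The algebra checks out (the coefficient $1-\frac{\alpha+\beta}{1+\beta}=\frac{1-\alpha}{1+\beta}$ is right, and $1+\beta>0$ because $\mathbf{K}$ is positive definite, so Sherman--Morrison applies), and you correctly obtain the sharper conclusion that $c=1$ for every optimal origin, which the paper also reaches but less transparently.
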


The proof of Theorem \ref{T2} is demonstrated in Appendix B. This theorem indicates that all local extrema of (\ref{g_mu}) lead to the same detector. Therefore, the local extremum problem of (\ref{g_mu}) does not exist! It is a \textcolor{black}{very} interesting conclusion. Since CE is an optimized version of CEM from the perspective of energy, this equivalence between CE and MF detectors indirectly \textcolor{black}{shows} that MF is always superior to CEM! That is \textcolor{black}{to say}, as one of the two classical algorithms in the field of remote sensing and signal processing, CEM can now be considered redundant. 

\begin{figure}[]\centering
	\subfigure[]{
		\label{fig_gu1}
		\includegraphics[width=0.4\textwidth]{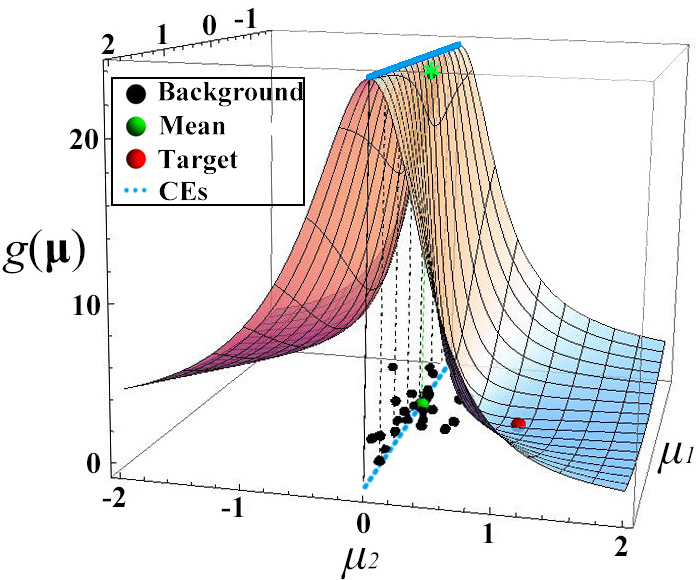}}
	\subfigure[]{
		\label{fig_gu2}
		\includegraphics[width=0.4\textwidth]{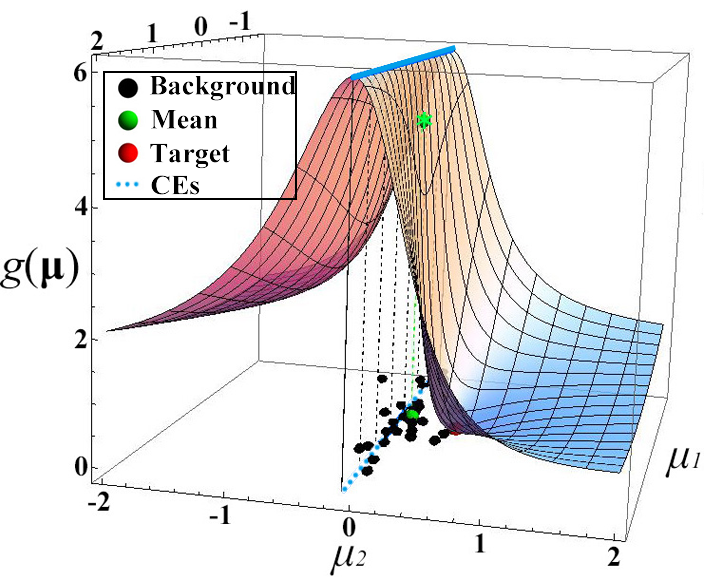}}
	\subfigure[]{
		\label{fig_gu3}
		\includegraphics[width=0.4\textwidth]{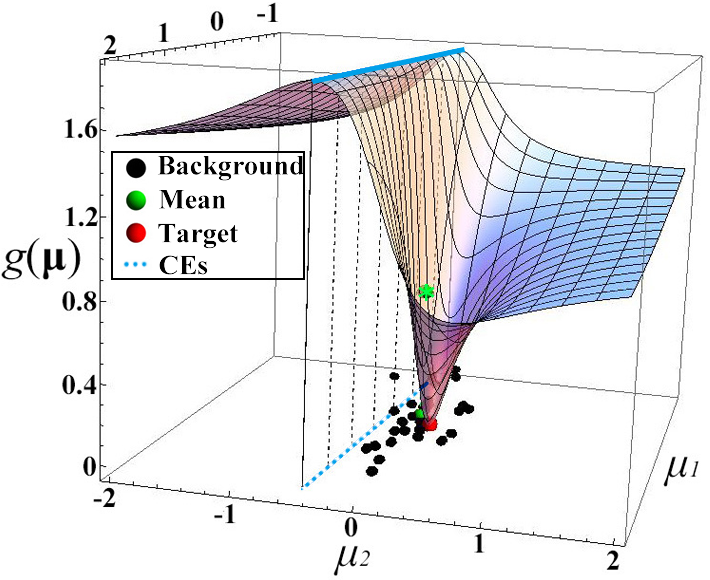}}	
	\caption{ Influence of data origin on the performance of target detection. 30 background points (marked \textcolor{black}{in black}), one target point (marked \textcolor{black}{in red}) and the mean vector (marked \textcolor{black}{in green}) \textcolor{black}{are distributed on the bottom plane}, and the CE \textcolor{black}{line} (marked in \textcolor{black}{blue and dotted}) are \textcolor{black}{also plotted} on the bottom plane. The three sub-figures represent the cases where the target point is far away from (a), close to (b) and within (c) the background points. When $ \mathbf{d} $ is far from the background points, $ g\left(\boldsymbol{\muup}\right) $ has a larger value with a maximum larger than 20, indicating that \textcolor{black}{the separability between target and background is the best}. When $ \mathbf{d} $ is close to background points, the value of $ g\left(\boldsymbol{\muup}\right) $ declines compared with that in (a) (with the maximum value less than 6). Accordingly, \textcolor{black}{the separability between target and background is reduced}. The last case is that the target point lies within the background, which makes it difficult to differentiate the target from the background points. In this case, the value of $ g\left(\boldsymbol{\muup}\right) $ is the lowest with the maximum value less than 2. That is to say, \textcolor{black}{the separability between target and background is the worst}. In all, wherever $ \mathbf{d} $ is, $ g\left(\boldsymbol{\muup}\right) $ can always reach the maxima at the solution of the CE equation.}
	\label{Fig_gu}
\end{figure}

There is an elegant geometry behind (\ref{g_mu}) and (\ref{basicEq1}). Thus, in the following, we will further illustrate the relationship between the objective function in (\ref{g_mu}) and the CE equation in (\ref{basicEq1}) from a geometrical view. First of all, 30 2-dimensional random points, \textcolor{black}{which are scattered in the bottom of Fig. \ref{Fig_gu} and marked in black}, are generated as background by the "randn" function in Matlab. Then, an extra point is manually set as the target \textcolor{black}{and marked in red. In addition, the surface of $ g(\boldsymbol{\muup})$ as the function of $ \boldsymbol{\muup}$ is depicted in Fig.\ref{Fig_gu}. Seemingly, the data origin can be placed in any position on the plane, where the target and background points are located. Yet, different data origins indicate different filter output energies, and thereby different target detection performances. From (\ref{basicEq1}), only those data origins, which satisfy the CE equation, can make the filter output energy reach the minimum (or make $ g(\boldsymbol{\muup})$ maximum), and they are located in the blue dotted straight line in the bottom. On the other hand,}  $ g\left(\boldsymbol{\muup}\right) $ reaches the minimum point at $ \boldsymbol{\muup}=\mathbf{d} $ with $ g\left(\mathbf{d}\right)=0 $. 


From \Cref{Fig_gu} one can notice that the shape of surface $ g\left(\boldsymbol{\muup}\right) $ varies with the target $ \mathbf{d} $. However, the function always reaches the maxima at the solution of the CE equation. Another fact should be pointed out here that the mean vector (green point) is not corresponding to the solution of the CE equation. Yet amazingly, MF has the same target detection performance as CE.

\section{Experiments}
For ease of understanding, we \textcolor{black}{compared} the performance of the three algorithms (CEM, MF, CE) using synthetic and real data. Two metrics, the output energy and correlation, together with the receiver operating characteristic (ROC) curve are adopted in the following experiments.

\begin{figure}[!hbt]\centering	
	\subfigure[]{
		\label{simulated_3d}
		\includegraphics[width=0.48\textwidth]{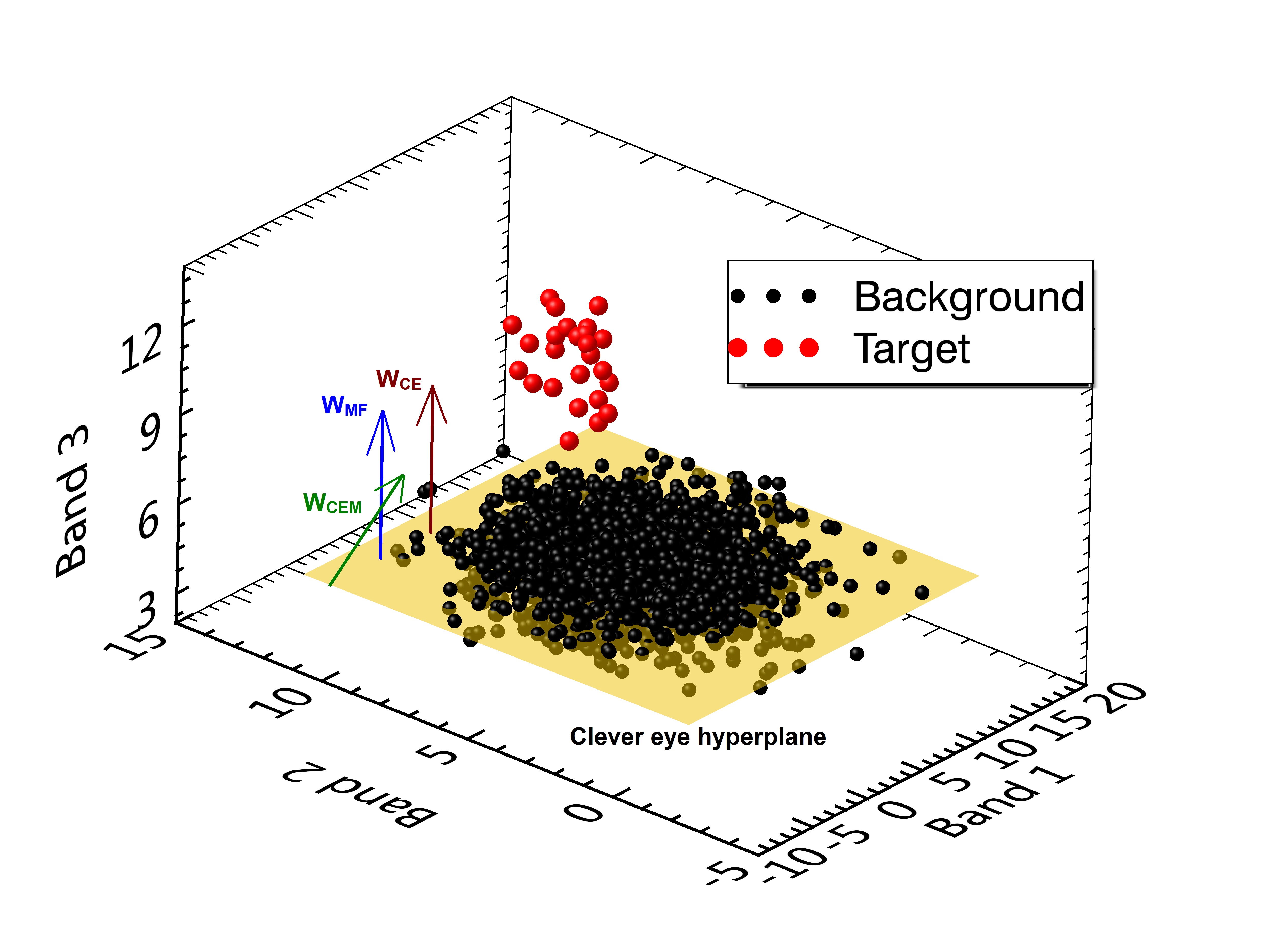}}
	\subfigure[]{
		\label{simulated_true}
		\includegraphics[trim=200mm 110mm 200mm 50mm,clip,width=4.cm]{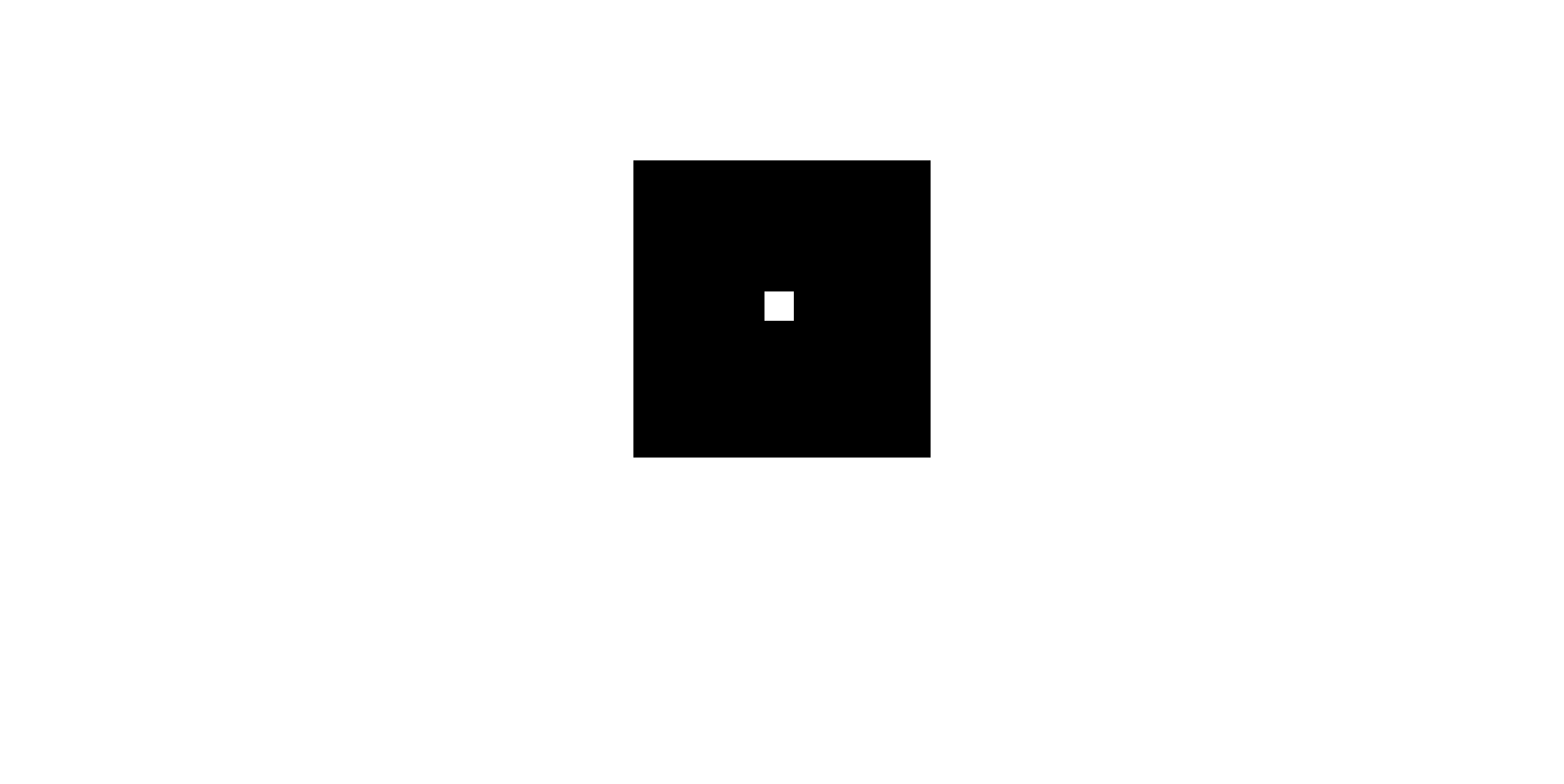}}
	\subfigure[]{
		\label{simulated_cem}
		\includegraphics[trim=200mm 110mm 200mm 51mm,clip,width=4.cm]{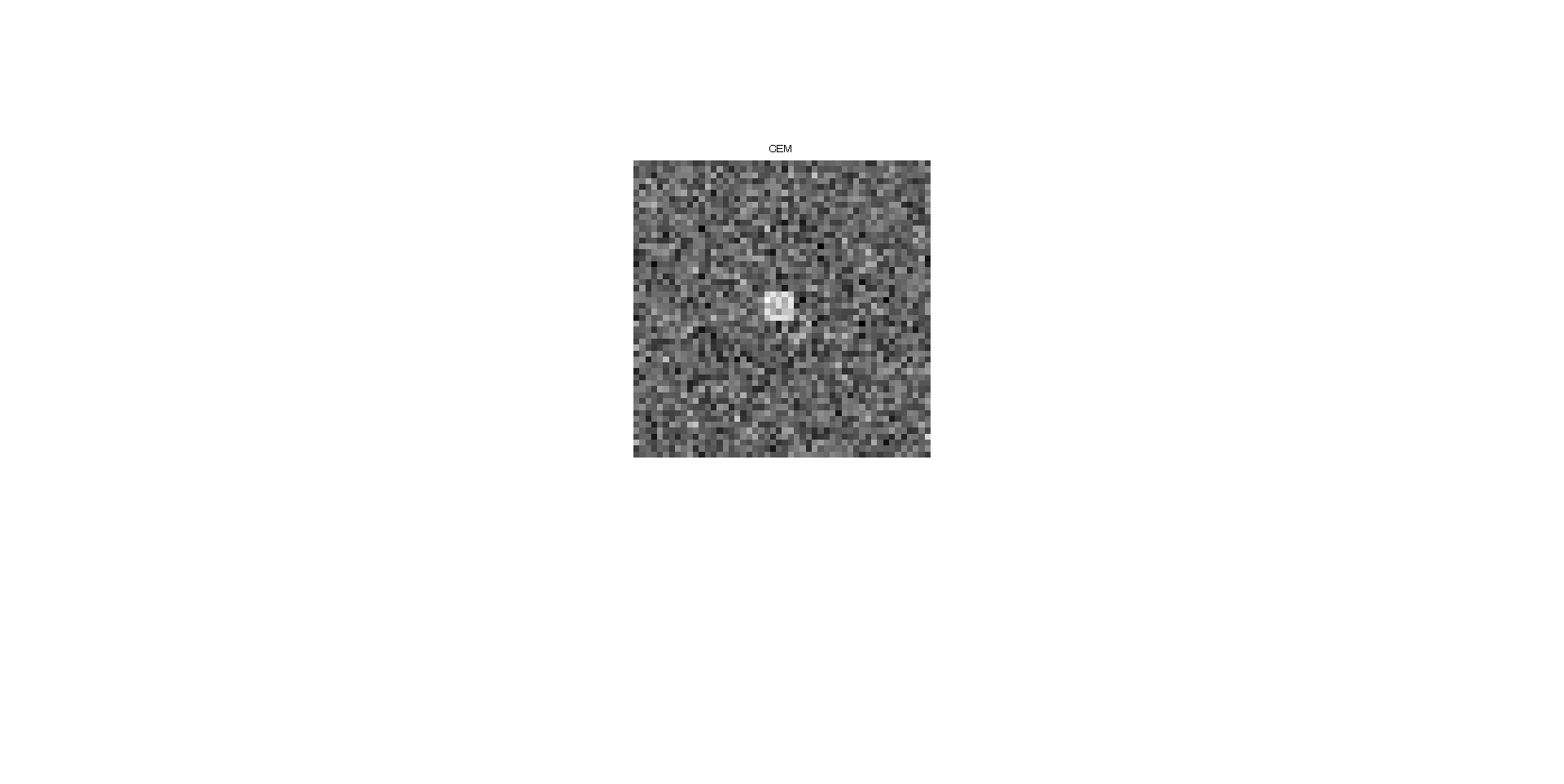}}
	\subfigure[]{
		\label{simulated_mf}
		\includegraphics[trim=200mm 110mm 200mm 50mm,clip,width=4.cm]{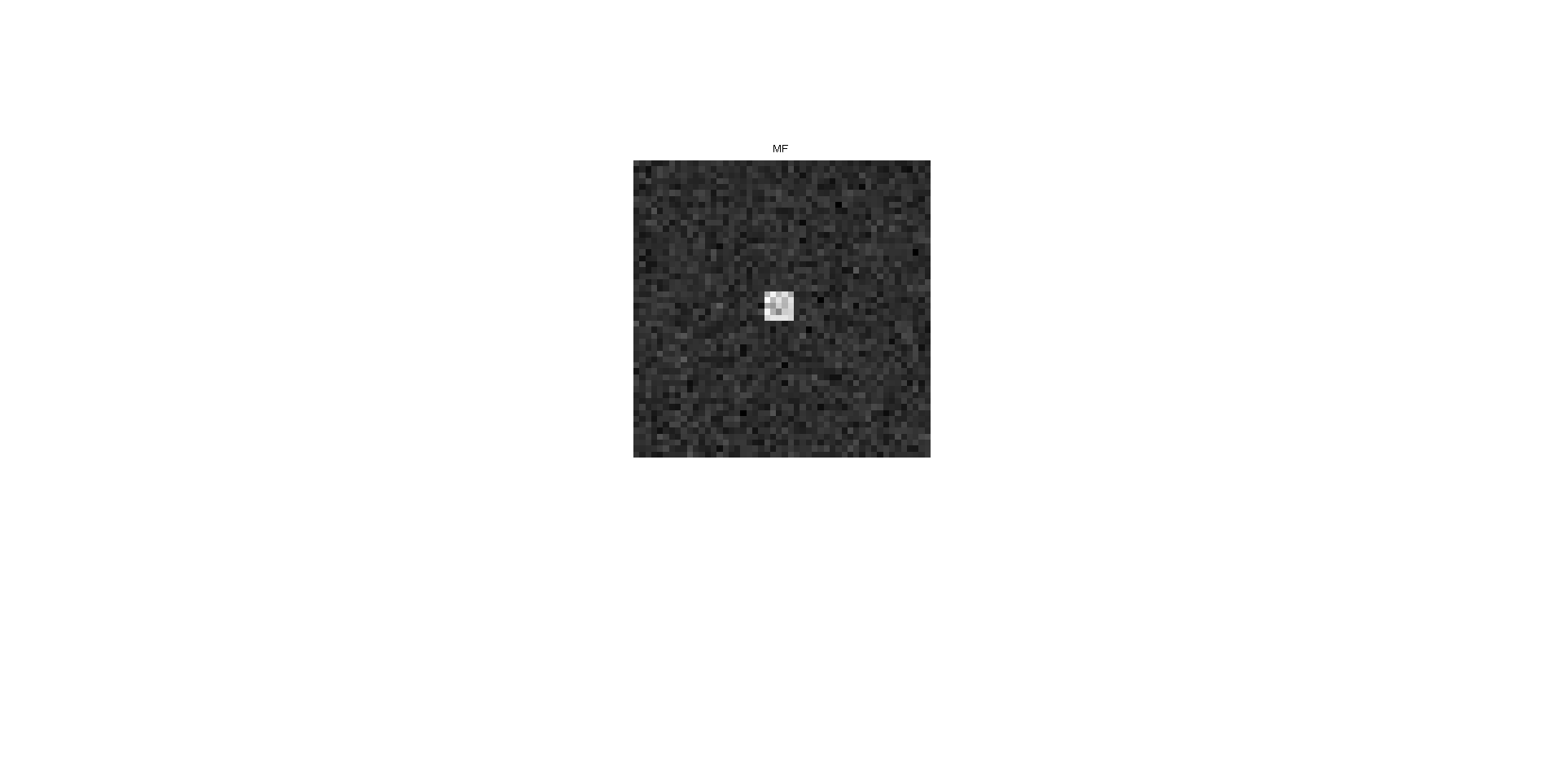}}
	\subfigure[]{
		\label{simulated_ce}
		\includegraphics[trim=200mm 110mm 200mm 50mm,clip,width=4.cm]{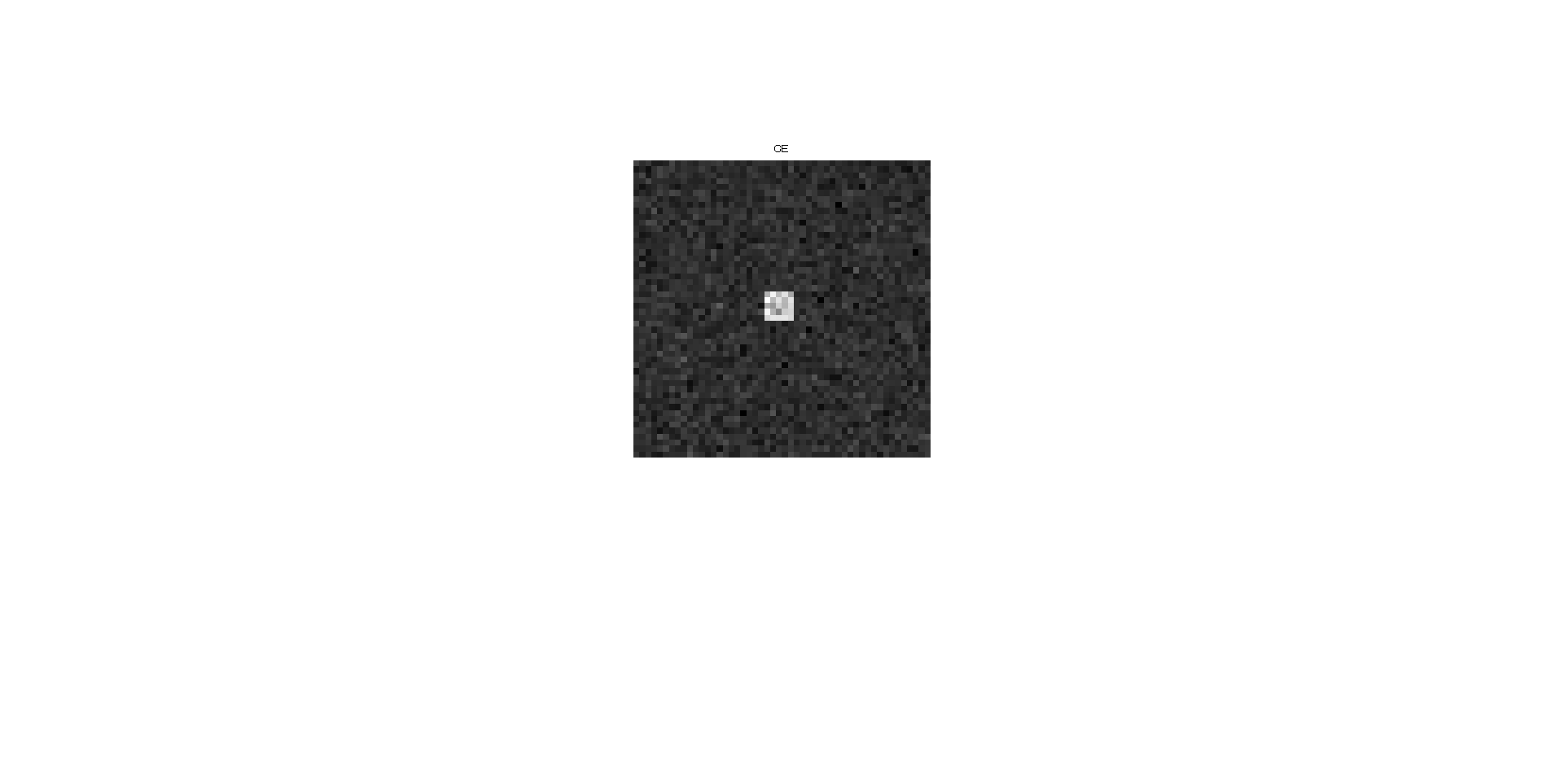}}
	\caption{Distribution of the synthetic image in 2-dimensional feature space (a), true image (b) and the target detection results of CEM (c), MF (d) and CE (e).}
	\label{SimulatedData1}
\end{figure}

\subsection{Experiments with synthetic data}
A synthetic image is designed as follows: 1) generate a $ 50 \times 50 \times 3 $ background image following a 3-dimension normal distribution; 2) add a normally distributed $ 5 \times 5 \times 3 $ target image in the middle of the background image. The distribution of the simulated data set in the three dimensional feature space is shown in Fig. \ref{simulated_3d}. The data cloud of the background is relatively flat with the target points positioned above.

The true image is shown in Fig.\ref{simulated_true} and the target detection outputs of CEM, MF and CE are given in \Cref{simulated_cem,simulated_mf,simulated_ce}. Visually, the performance of CEM is worst, since the contrast between the target and background of CEM is lowest. \textcolor{black}{Also} the output energy of CEM is the largest, as listed in \Cref{OutputEnergy}. The reason \textcolor{black}{that} CE outperforms CEM can be attributed to the fact that CE can search the best data origin for target detection. As revealed by the basic equation in (\ref{basicEq1}), all the CE points are located on a plane, as shown in \Cref{simulated_3d}, and they always lead to the same detector, $ w_{CE} $, which has \textcolor{black}{the} same direction as the MF detector, $ w_{MF} $. However, CEM has a quite different detection direction. Clearly, a better separability between target and background can be achieved along the direction derived from CE and MF.
 
\begin{figure}[!htb]\centering	
	\subfigure[]{
		\label{simulated_acemmf}
		\includegraphics[width=0.23\textwidth]{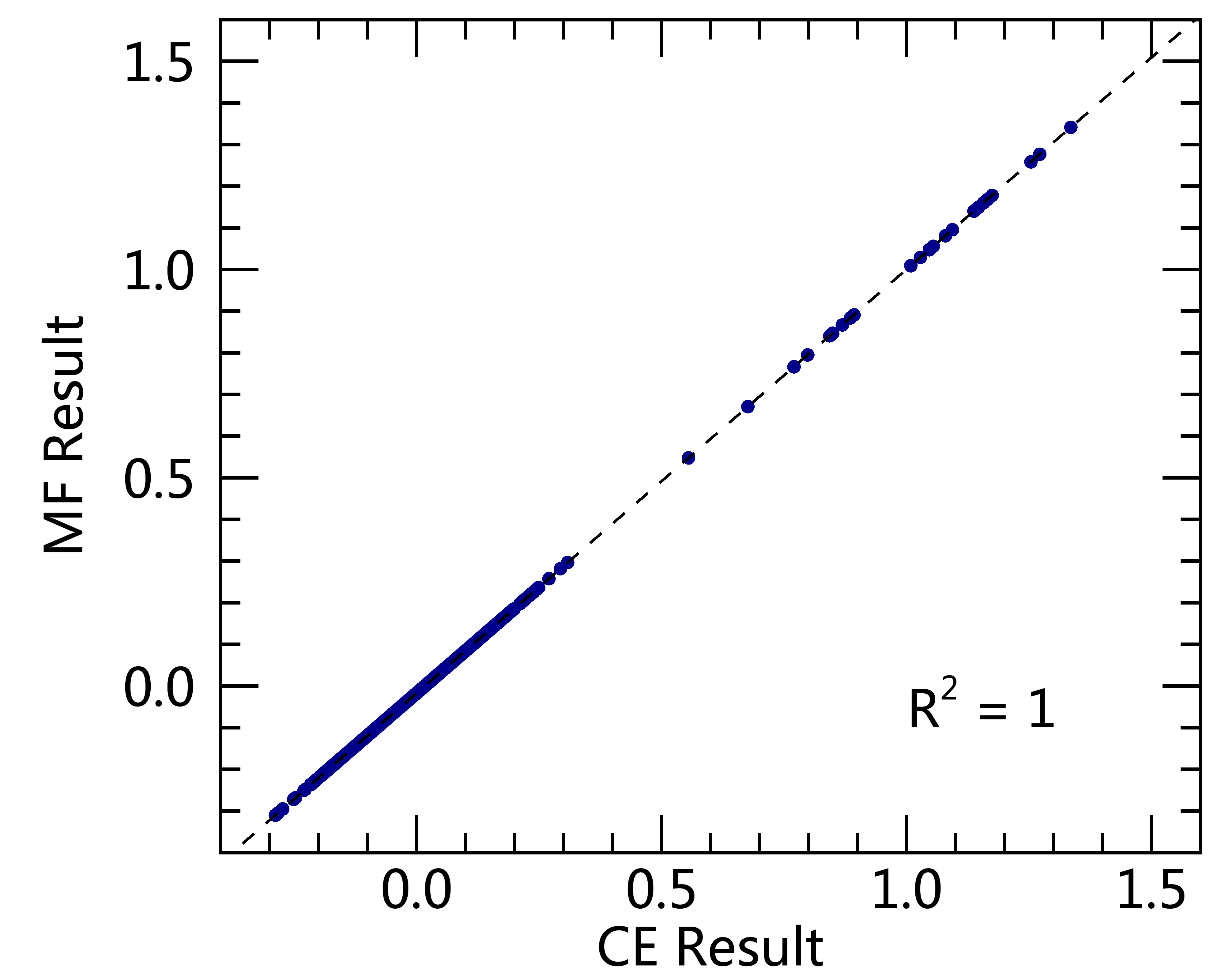}}
	\subfigure[]{
		\label{simulated_cemacem}
		\includegraphics[width=0.23\textwidth]{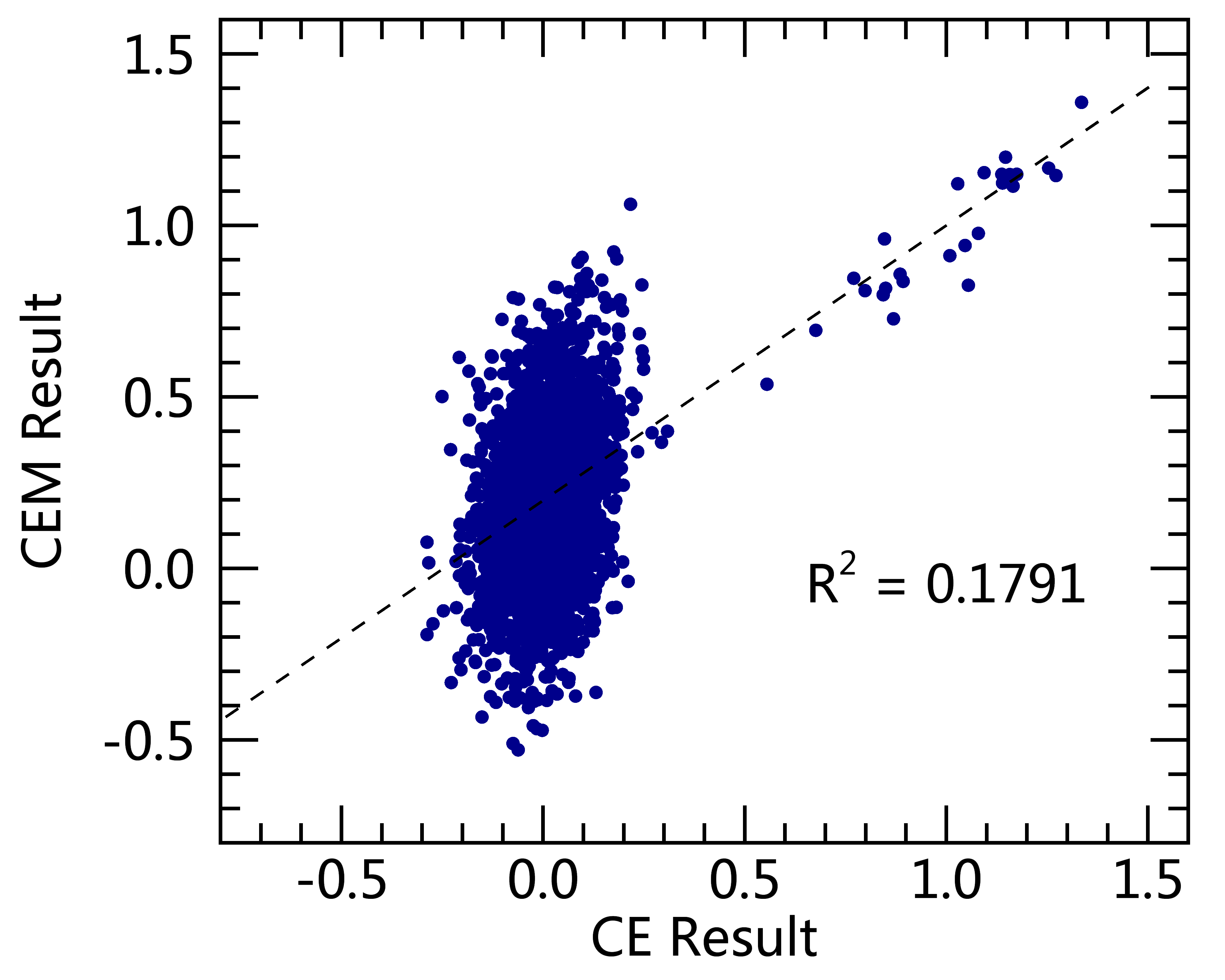}}
	\caption{Comparison of the target detection results of MF and CE (a), and CEM and ranCE(b).}
	\label{SimulatedData2}
\end{figure}

\begin{figure}[!hbt]\centering	
	\includegraphics[width=0.4\textwidth]{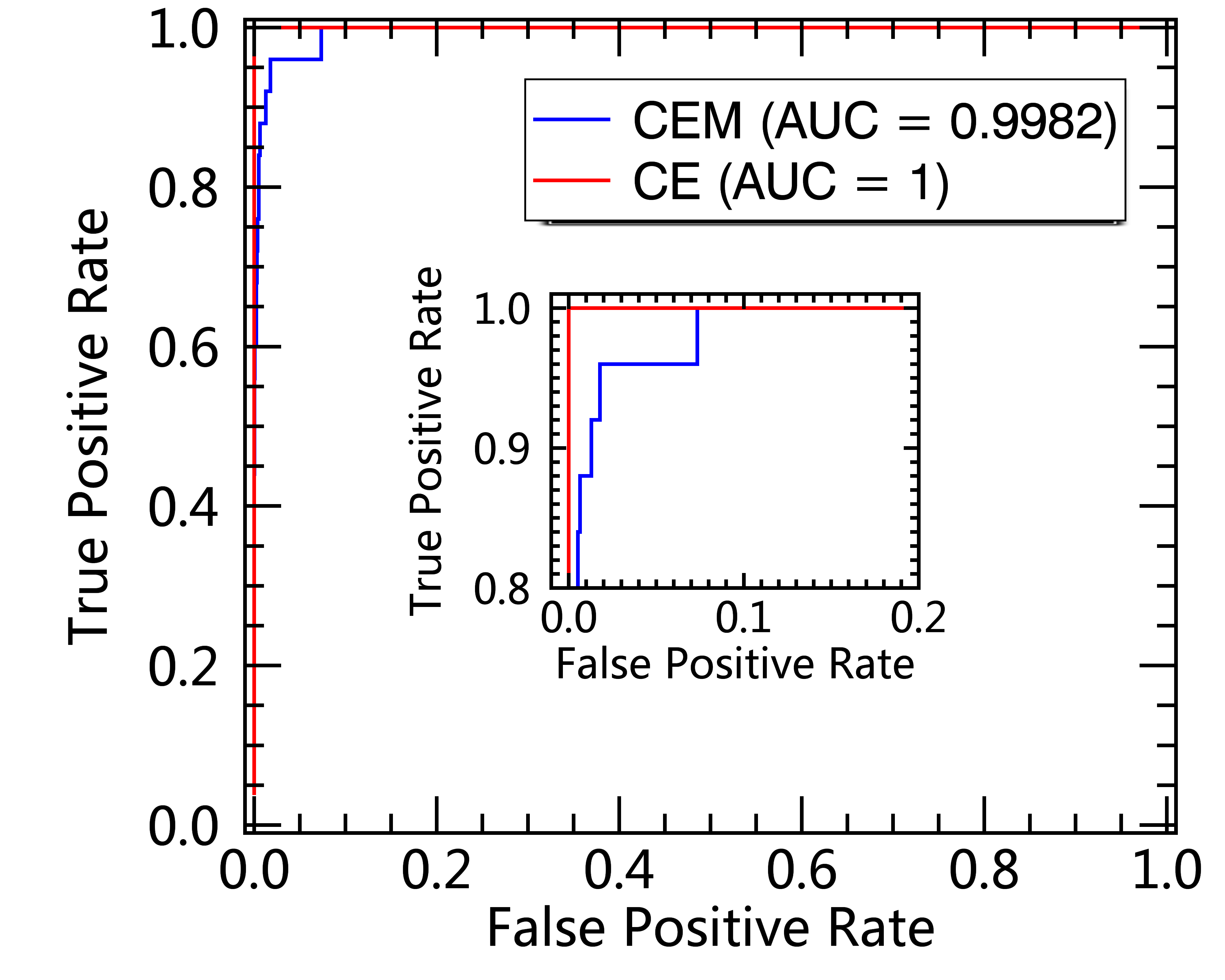}
	\caption{Comparison of the ROC curves for CEM and CE detectors using synthetic data.}
	\label{SimulatedDataroc}
\end{figure}
Yet, it is much harder to compare the results of MF and CE visually, although CE can achieve a lower output energy. Fig.\ref{simulated_acemmf} plots the comparison of the output results of MF and CE. It can be seen that their outputs are completely linearly dependent ($ \rm R^2=1 $)! This result directly indicates the equivalence between CE and MF, which has been proved in Theorem \ref{T2}. On the other hand, the correlation coefficient between the CE and CEM output is very low ($ \rm R^2=0.1791 $) as shown in \Cref{simulated_cemacem}, which infers that changing the data origin can greatly alter the CEM output. The ROC curves of CEM and CE plotted in \Cref{SimulatedDataroc} \textcolor{black}{show} an advantage of CE over CEM in \textcolor{black}{the} target detection of this data set.    

\begin{table}[thp]
	\caption{Output energy of CEM, MF and CE for synthetic and real data.}
	\centering
	\begin{tabular}{cccc}
		\hline
		Method & CEM & MF & CE \\
		\hline
		Synthetic data & 0.1048 & 0.0173 & 0.01701 \\
		Real data  & 9.1895e-04 & 9.1303e-04 & 9.1220e-04\\
		\hline
	\end{tabular}
	\label{OutputEnergy}
\end{table}

\subsection{Experiments with real data}

\begin{figure}[!hbt]\centering	
	\subfigure[]{
		\label{RealData_falsecolor}
		\includegraphics[width=3.5cm]{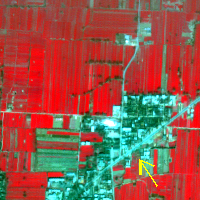}}
	\subfigure[]{
		\label{RealData_groundtruth}
		\includegraphics[width=3.5cm]{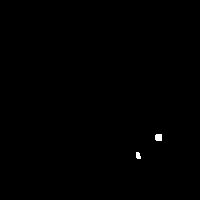}}
	\subfigure[]{
		\label{RealData_d}
		\includegraphics[width=4cm]{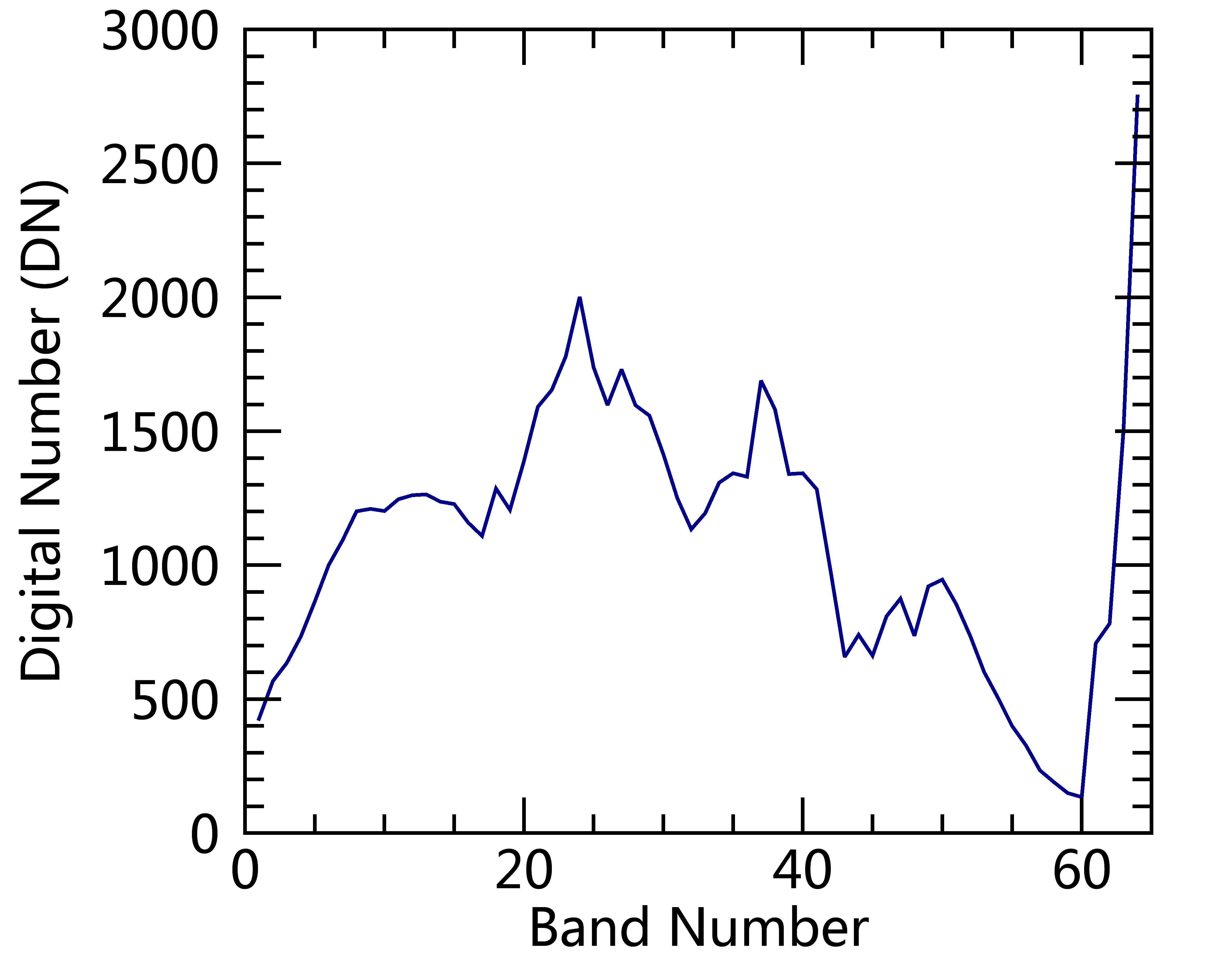}}
	\caption{False color image (R:857.9nm,G:669.6nm,B:559.6nm) of the real data (a), the ground truth image (b) and spectral signature of the target (c).}
	\label{RealData1}
\end{figure}

In this section, we will demonstrate the equivalence between CE and MF by using a real hyperspectral image of $ 200\times200 $  pixels, as shown in Fig.\ref{RealData_falsecolor}. It \textcolor{black}{was} collected by the Operational Modular Imaging Spectrometer-II, which is a hyperspectral imaging system developed by \textcolor{black}{the} Shanghai Institute of Technical Physics, Chinese Academy of Sciences. The data \textcolor{black}{were} acquired in the city of Xi’an, China in 2003\textcolor{black}{, and are} composed of 64 bands from 460nm to 10250 nm \textcolor{black}{at a} spatial resolution \textcolor{black}{of} 3.6m. A small man-made target, marked by an arrow in the scene (\Cref{RealData_falsecolor}) is selected as the target of interest, \textcolor{black}{and the} spectral signature is shown in Fig.\ref{RealData_d}. The manually determined ground truth map based on field survey is shown in \Cref{RealData_groundtruth}.

The target detection results of the three methods are quite similar and no significant differences can be observed visually (\Cref{RealData2}). We present the correlation curves between the outputs of the three detectors in \Cref{RealData_result,RealData_result1}, and it can be seen that the MF output is completely linearly correlated to the CE output ($ \rm R^2=1 $). Again, this result reveals the equivalence between CE and MF.

\begin{figure}[!hbt]\centering	
	\subfigure[]{
		\label{real_CEM}
		\includegraphics[trim=150mm 50mm 150mm 25mm,clip,width=3.5cm]{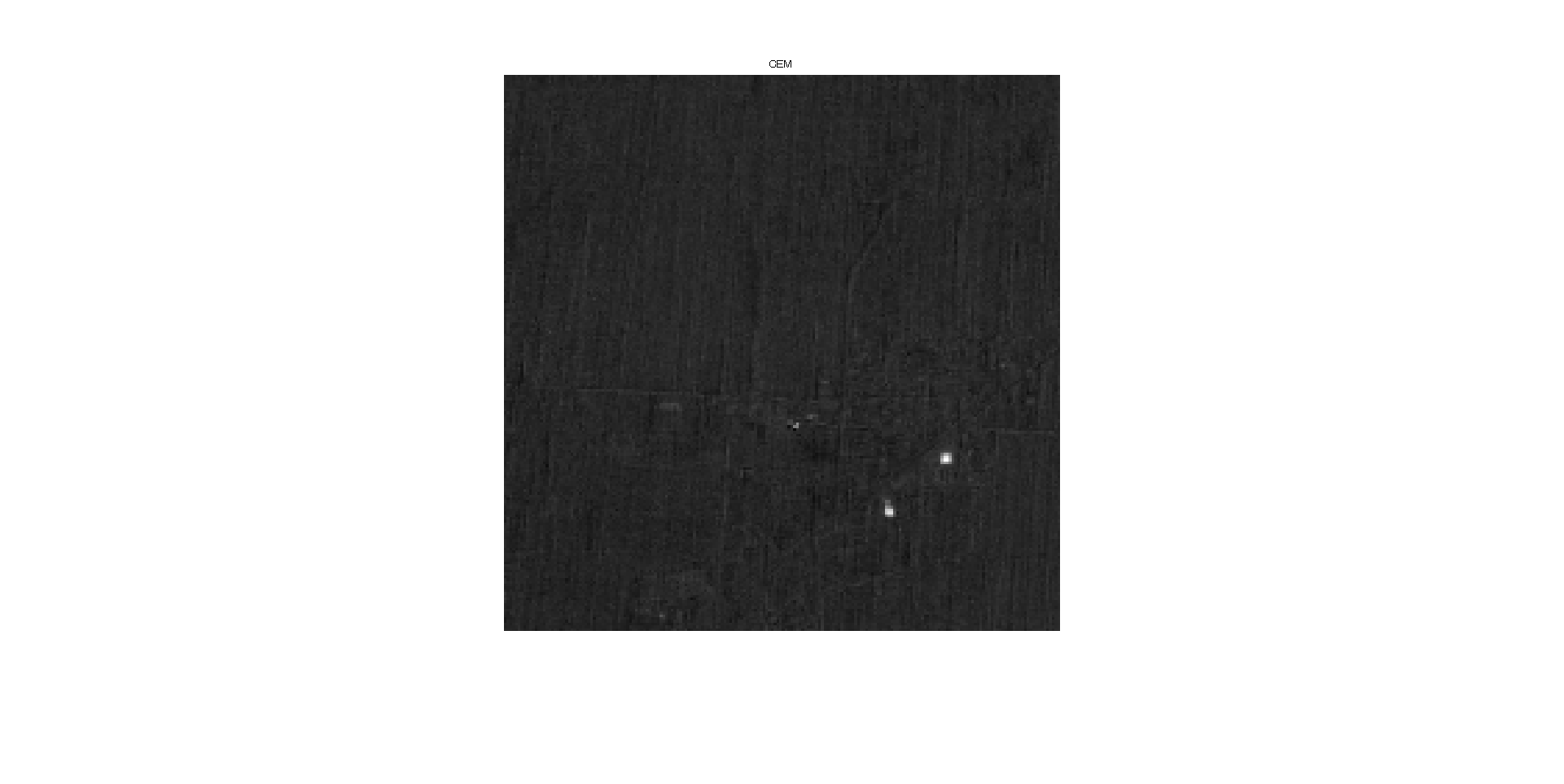}}
	\subfigure[]{
		\label{real_MF}
		\includegraphics[trim=150mm 50mm 150mm 25mm,clip,width=3.5cm]{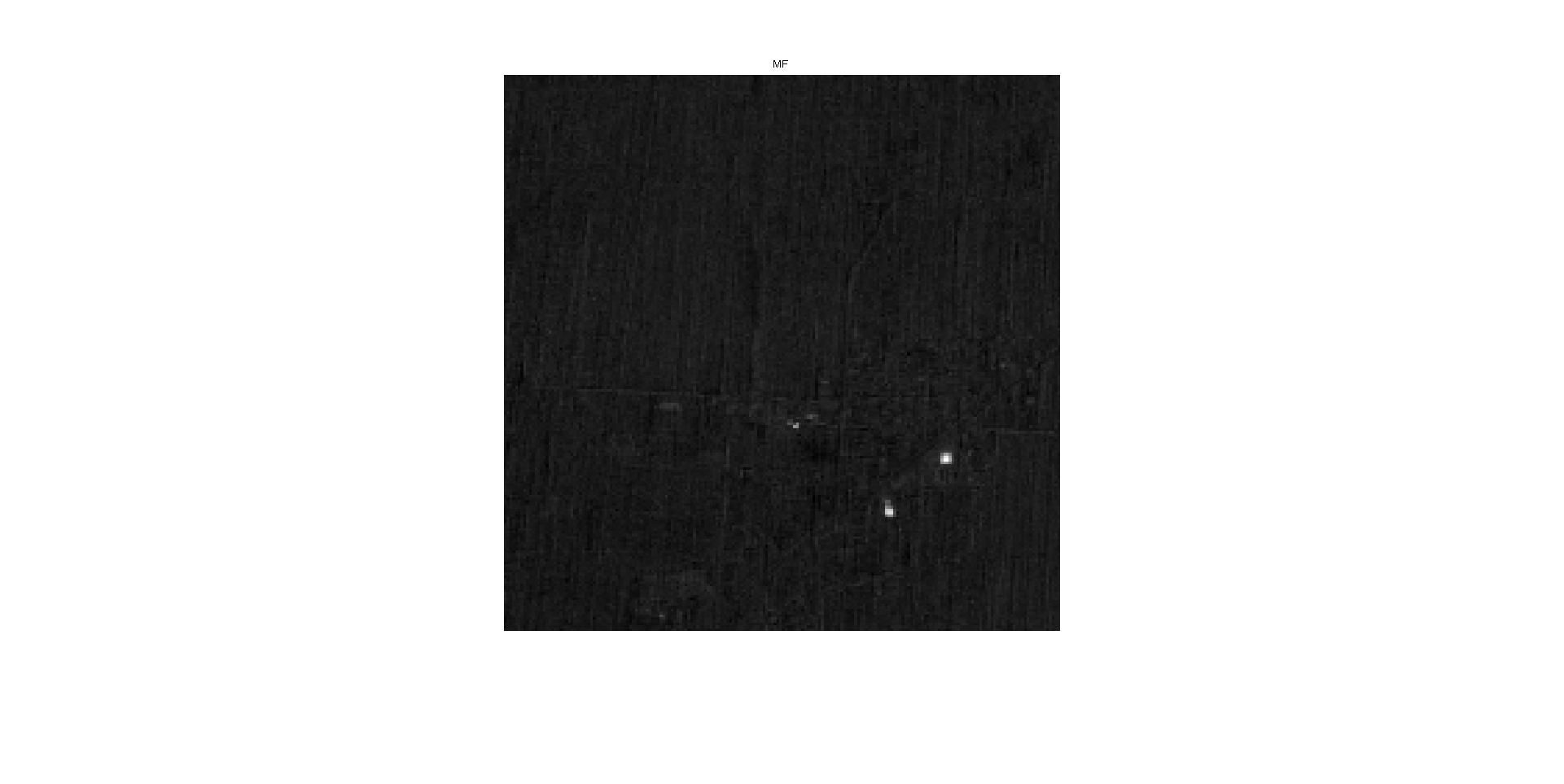}}
	\subfigure[]{
		\label{real_CE}
		\includegraphics[trim=150mm 50mm 150mm 25mm,clip,width=3.5cm]{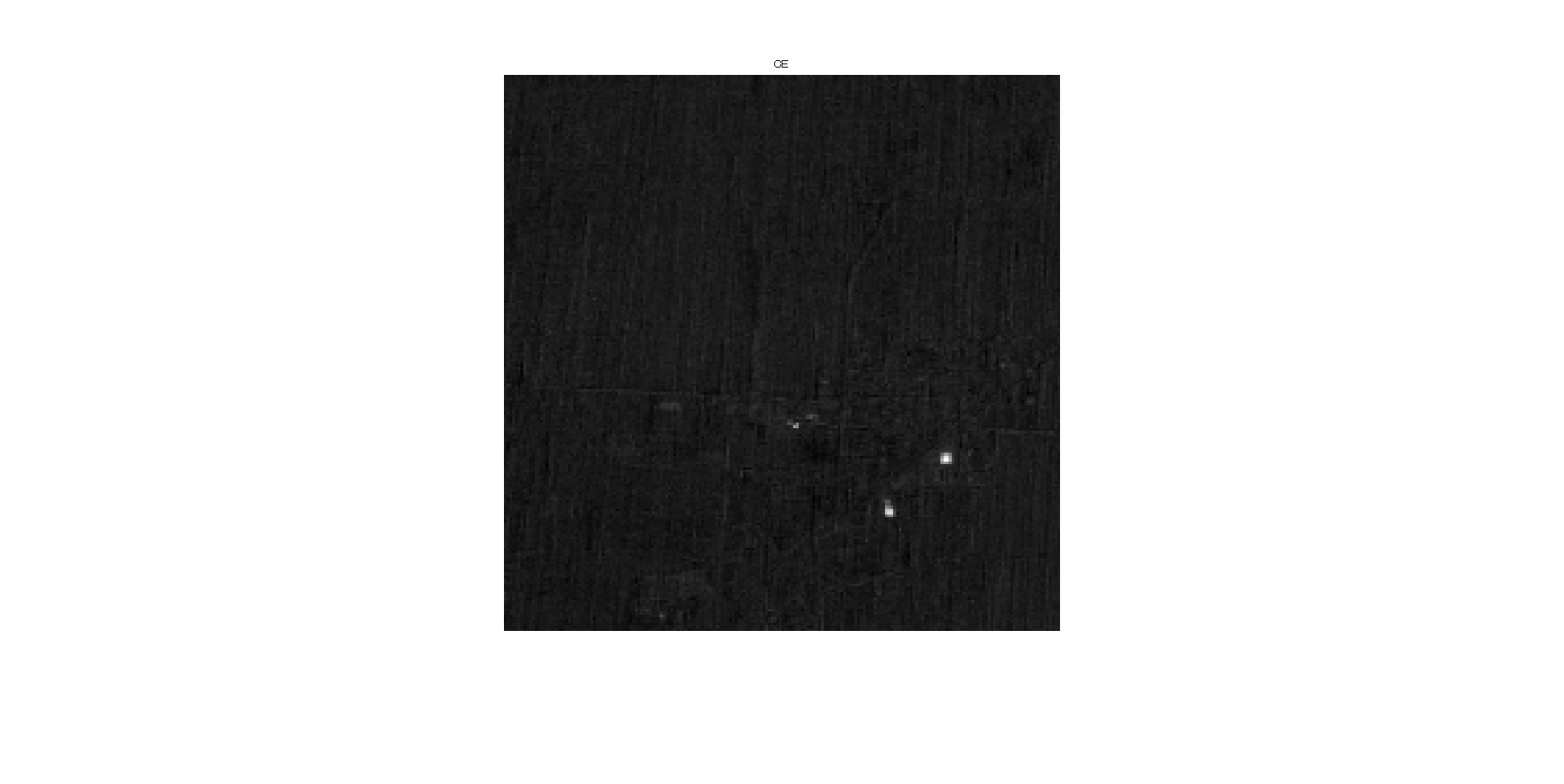}}
	\caption{Target detection results of CEM (a), MF (b) and CE (c).}
	\label{RealData2}
\end{figure}

The target results by CEM and CE \textcolor{black}{are} also very similar with \textcolor{black}{a} correlation as high as $ \rm R^2=0.9926 $. The reason \textcolor{black}{for this} can be explained as follows. According to Ref \cite{Geng201632}, the advantage of MF over CEM decreases with \textcolor{black}{an}
increase \textcolor{black}{in} the number of the bands involved. Since the number of bands for hyperspectral data is usually very high, CEM can derive very similar result as MF and ACEM. That is why CEM has been widely applied for hyperspectral target detection and rarely questioned.  However, the output energy of CEM is still higher than \textcolor{black}{that} of both MF and CE as shown in \Cref{OutputEnergy}. Moreover, according to the ROC curves in \Cref{RealDataroc}, CE still surpasses CEM, though they both have very good detection performance.

\begin{figure}[!hbt]\centering	
	\subfigure[]{
		\label{RealData_result}
		\includegraphics[width=0.23\textwidth]{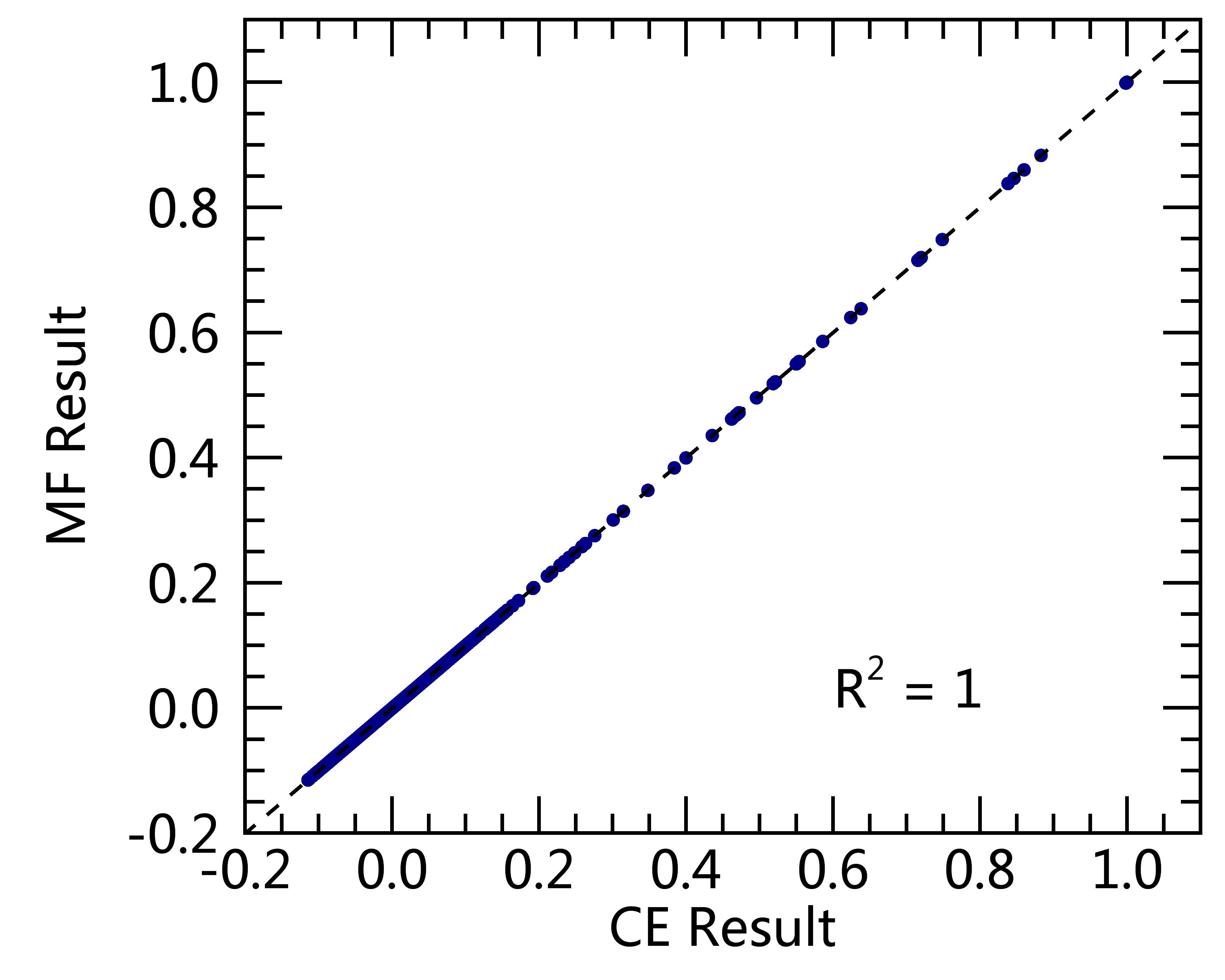}}
	\subfigure[]{
		\label{RealData_result1}
		\includegraphics[width=0.23\textwidth]{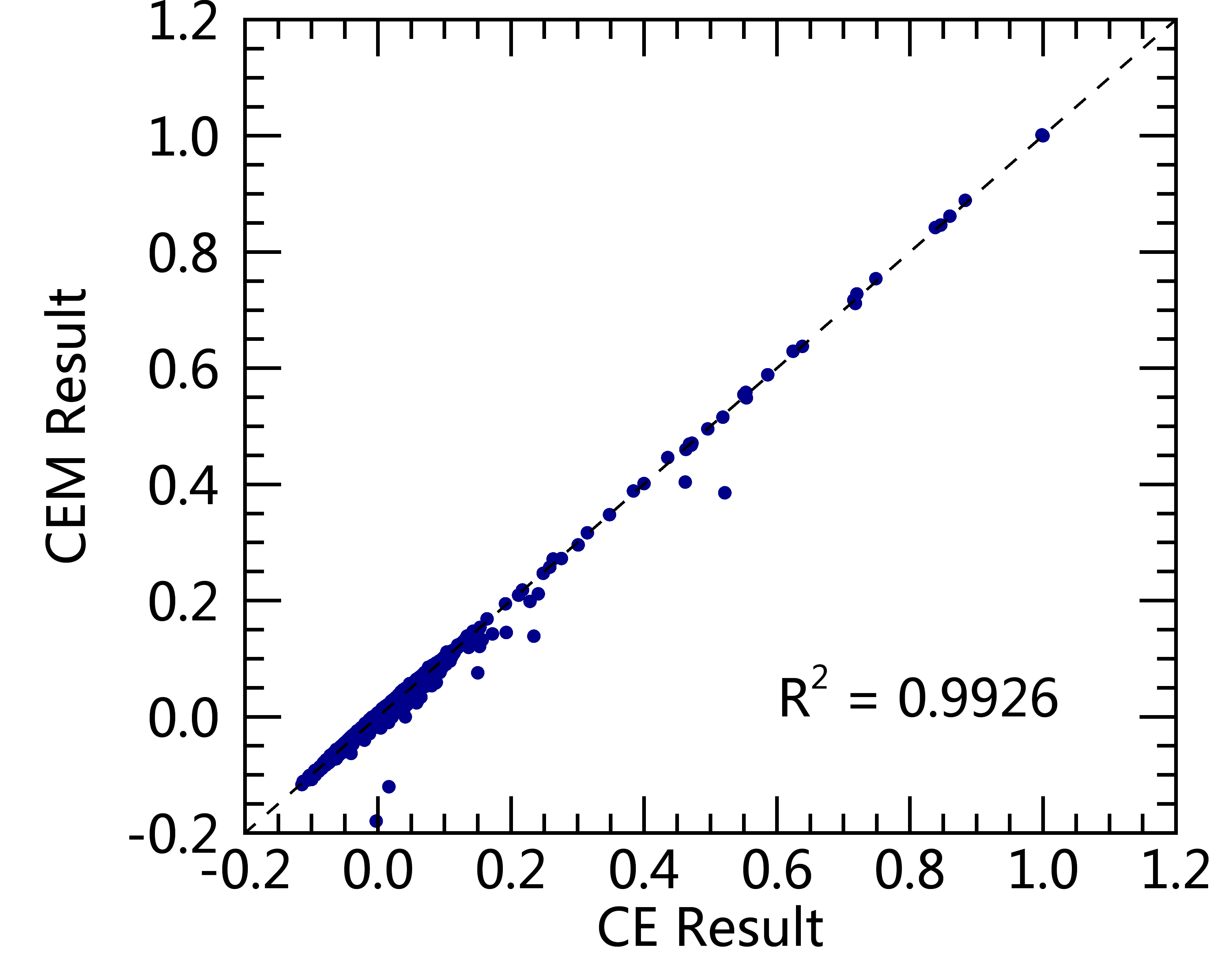}}
	\caption{Comparison of the target detection results of  MF and CE (a), and comparison result between CEM and CE (b) using real data.}
	\label{RealData3}
\end{figure}

\begin{figure}[!hbt]\centering	
      \includegraphics[width=0.4\textwidth]{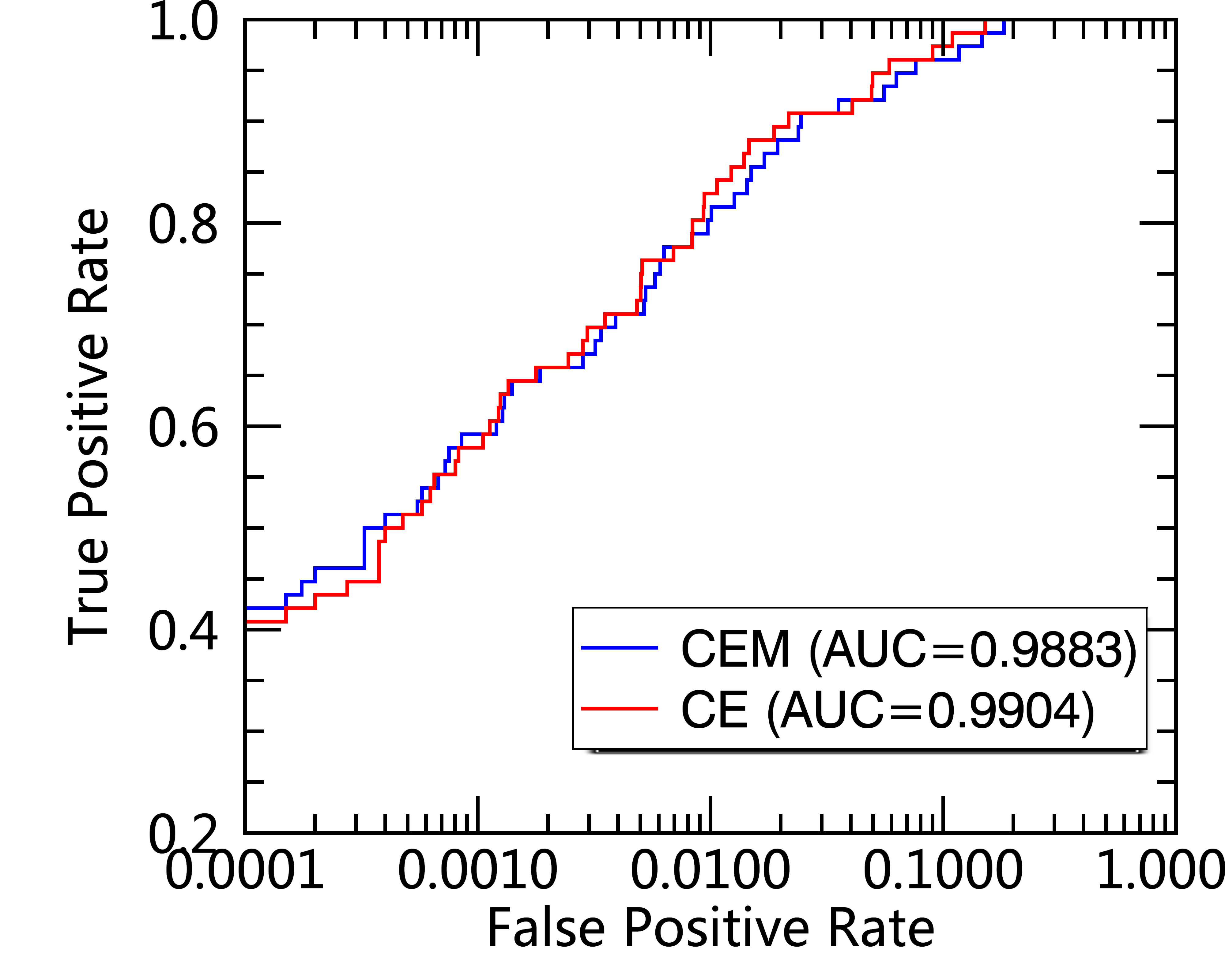}
	\caption{Comparison of the ROC curves for CEM and CE detectors using real data.}
	\label{RealDataroc}
\end{figure}

\section{Conclusion}
MF is the best target detector from the perspective of maximum likelihood, while CEM is the representative one from the perspective of energy. Usually, it is difficult to compare algorithms developed from different criteria, so MF and CEM are considered as two benchmark methods in target detection and both are embedded in the ENVI software, which is one of the most frequently used software packages in the remote sensing community. When the origin is introduced as a variable, it is found that there exists an inherent relationship between the two criteria \cite{Geng201632}. Specifically, we have proved that the basic elements in target detection (including target vector, mean vector, covariance matrix and origin) should follow a fundamental equation to acquire the best detector from the angle of energy. In addition, we prove the equivalence between CE and MF, which indirectly demonstrates that MF is always superior to CEM. Thus, we suggest that the classical target detection CEM should be considered redundant.

On the other hand, the MF detector is not optimal from the perspective of output energy. Yet it is equivalent to the CE detector, which indicates that the energy criterion is not a perfect one. Therefore, we will focus on searching \textcolor{black}{for} a more reasonable criterion for target detection in future \textcolor{black}{studies}.


%

\appendices

\numberwithin{equation}{section}
\section{Proof of Theorem 1}
\setcounter{theorem}{0}
\begin{theorem}
	Assume that the target vector $ \mathbf{d} $ is known. To achieve the optimal target detector from the perspective of output energy, the target vector, $ \mathbf{d} $, data mean vector, $ \mathbf{m} $, the data covariance matrix, $ \mathbf{K} $ and origin, $ \boldsymbol{\muup} $ must satisfy
	\begin{equation}
		\left(\mathbf{d}-\mathbf{m}\right)^T\mathbf{K}^{-1}\left(\mathbf{m}-\boldsymbol{\muup}\right)=1
	\end{equation}
	\label{theorem1}
\end{theorem}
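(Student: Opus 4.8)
The plan is to exploit the rank-one structure of $\mathbf{R}_{\boldsymbol{\muup}}$ to collapse the vector optimization (\ref{g_mu}) into a single scalar identity. Writing $\mathbf{b}=\mathbf{d}-\mathbf{m}$ and $\mathbf{v}=\mathbf{m}-\boldsymbol{\muup}$, equation (\ref{Ru}) reads $\mathbf{R}_{\boldsymbol{\muup}}=\mathbf{K}+\mathbf{v}\mathbf{v}^{T}$ and $\mathbf{d}-\boldsymbol{\muup}=\mathbf{b}+\mathbf{v}$. First I would invert the rank-one update by the Sherman--Morrison formula,
\begin{equation}
\mathbf{R}_{\boldsymbol{\muup}}^{-1}=\mathbf{K}^{-1}-\frac{\mathbf{K}^{-1}\mathbf{v}\mathbf{v}^{T}\mathbf{K}^{-1}}{1+\mathbf{v}^{T}\mathbf{K}^{-1}\mathbf{v}},
\end{equation}
so that $g$ is expressed solely through $\mathbf{K}^{-1}$. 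Abbreviating the three scalars $p=\mathbf{b}^{T}\mathbf{K}^{-1}\mathbf{b}$, $q=\mathbf{b}^{T}\mathbf{K}^{-1}\mathbf{v}$ and $\gamma=\mathbf{v}^{T}\mathbf{K}^{-1}\mathbf{v}$, I would then expand $g=(\mathbf{b}+\mathbf{v})^{T}\mathbf{R}_{\boldsymbol{\muup}}^{-1}(\mathbf{b}+\mathbf{v})$ and collect terms.

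The key algebraic step is that this expansion collapses to the compact form
\begin{equation}
g(\boldsymbol{\muup})=(p+1)-\frac{(1-q)^{2}}{1+\gamma}.
\label{eq:gcompact}
\end{equation}
Because $\mathbf{K}$ is positive definite we have $\gamma\ge 0$, so the subtracted term in (\ref{eq:gcompact}) is nonnegative and $g(\boldsymbol{\muup})\le p+1$, with equality if and only if $q=1$. Writing $q$ out, this equality is exactly
\begin{equation}
(\mathbf{d}-\mathbf{m})^{T}\mathbf{K}^{-1}(\mathbf{m}-\boldsymbol{\muup})=1,
\end{equation}
which is the claimed basic equation (\ref{basicEq1}); moreover the maximizing set $\{q=1\}$ is an affine hyperplane, matching the stated solution dimension of $L-1$. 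Verifying the cancellation that produces the single $(1-q)^{2}$ numerator in (\ref{eq:gcompact}) is where I expect the bookkeeping to be most delicate, and it is really the heart of the theorem.

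To confirm that (\ref{basicEq1}) captures \emph{all} local maxima and not merely the global optimum, I would finally differentiate (\ref{eq:gcompact}). Since $\mathbf{v}$ depends linearly on $\boldsymbol{\muup}$ and $\mathbf{K}^{-1}$ is symmetric, one has $\nabla_{\mathbf{v}}q=\mathbf{K}^{-1}\mathbf{b}$ and $\nabla_{\mathbf{v}}\gamma=2\mathbf{K}^{-1}\mathbf{v}$; after cancelling the invertible factor $\mathbf{K}^{-1}$, the stationarity condition $\nabla g=\mathbf{0}$ factors neatly as
\begin{equation}
(1-q)\bigl[(1+\gamma)\mathbf{b}+(1-q)\mathbf{v}\bigr]=\mathbf{0}.
\end{equation}
The branch $q=1$ reproduces the basic equation, while the vanishing of the bracket forces $\mathbf{v}\parallel\mathbf{b}$; substituting $\mathbf{v}=\lambda\mathbf{b}$ yields the unique root $\lambda=-1$, i.e.\ $\boldsymbol{\muup}=\mathbf{d}$, at which $g=0$ is the global \emph{minimum}. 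Hence no spurious local maxima exist and every maximizer of $g$ satisfies (\ref{basicEq1}), completing the argument.
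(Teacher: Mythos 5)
Your proposal is correct, and it takes a genuinely different route from the paper. The paper works directly with the gradient of $g(\boldsymbol{\muup})$ in its original vector form: it computes $g'(\boldsymbol{\muup})$, factors it into a scalar times a vector, and identifies two branches of stationarity --- one giving the basic equation, the other (after some back-substitution) collapsing to $\boldsymbol{\muup}=\mathbf{d}$, which is discarded as the minimum. You instead substitute Sherman--Morrison \emph{before} doing any calculus and collapse $g$ to the scalar expression $g=(p+1)-\frac{(1-q)^{2}}{1+\gamma}$ with $p=\mathbf{b}^{T}\mathbf{K}^{-1}\mathbf{b}$, $q=\mathbf{b}^{T}\mathbf{K}^{-1}\mathbf{v}$, $\gamma=\mathbf{v}^{T}\mathbf{K}^{-1}\mathbf{v}$; I have checked the expansion ($g=p+2q+\gamma-\frac{(q+\gamma)^{2}}{1+\gamma}$ does reduce to that form) and your subsequent stationarity factorization, and both are right. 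Your approach buys two things the paper's does not. First, it yields a strictly stronger conclusion essentially for free: since $p$ is independent of $\boldsymbol{\muup}$ and $\gamma\ge 0$, the bound $g\le p+1$ with equality iff $q=1$ shows that the solution set of the basic equation is exactly the set of \emph{global} maximizers, and it gives the optimal output energy $1/(1+p)$ in closed form --- the paper only characterizes stationary points and must separately argue (via Theorem 2) that all of them are equivalent. Second, it sidesteps the paper's rather delicate gradient bookkeeping, which in the printed version contains an undefined vector $\mathbf{b}$ and an apparent $\mathbf{K}$-for-$\mathbf{K}^{-1}$ typo; your branch analysis of $(1-q)\bigl[(1+\gamma)\mathbf{b}+(1-q)\mathbf{v}\bigr]=\mathbf{0}$, with the second branch forcing $\mathbf{v}=-\mathbf{b}$ (i.e.\ $\boldsymbol{\muup}=\mathbf{d}$, the global minimum), reaches the same dichotomy more transparently. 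The only caveat worth recording is the degenerate case $\mathbf{d}=\mathbf{m}$, where $q\equiv 0$ and the equation has no solution, but the paper implicitly excludes this as well.
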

\begin{proof}
	To prove the theorem we first take the derivative of $ g\left(\boldsymbol{\muup}\right) $ with respect to $ \boldsymbol{\muup} $,
	\begin{equation}
		\begin{split}
			&g'\left(\boldsymbol{\muup}\right)=-2\mathbf{K}^{-1}\left(\mathbf{d}-\boldsymbol{\muup}\right)-\\
			&\frac{2\left(\mathbf{d}-\boldsymbol{\muup}\right)^T\mathbf{K}^{-1}\left(\mathbf{m}-\boldsymbol{\muup}\right)\left(1+\left(\mathbf{m}-\boldsymbol{\muup}\right)^T\mathbf{K}^{-1}\left(\mathbf{m}-\boldsymbol{\muup}\right)\right)\mathbf{K}^{-1}\left(-\mathbf{m}+2\boldsymbol{\muup}-\mathbf{d}\right)}{\left(1+\left(\mathbf{m}-\boldsymbol{\muup}\right)^T\mathbf{K}^{-1}\left(\mathbf{m}-\boldsymbol{\muup}\right)\right)^2}\\
			&+\frac{2\left(\left(\mathbf{d}-\boldsymbol{\muup}\right)^T\mathbf{K}\left(\mathbf{m}-\boldsymbol{\muup}\right)\right)^2\mathbf{K}^{-1}\left(\mathbf{m}-\boldsymbol{\muup}\right)}{\left(1+\left(\mathbf{m}-\boldsymbol{\muup}\right)^T\mathbf{K}^{-1}\left(\mathbf{m}-\boldsymbol{\muup}\right)\right)^2}.
		\end{split}
	\end{equation}
	
	After a little algebra, we can have
	\begin{equation}
		\begin{split}
			&g'\left(\boldsymbol{\muup}\right)=\frac{2\left(1+\left(\mathbf{m}-\boldsymbol{\muup}\right)^T\mathbf{K}^{-1}\left(\mathbf{m}-\boldsymbol{\muup}\right)-\left(\mathbf{d}-\boldsymbol{\muup}\right)^T\mathbf{K}^{-1}\left(\mathbf{m}-\boldsymbol{\muup}\right)\right)}{\left(1+\left(\mathbf{m}-\boldsymbol{\muup}\right)^T\mathbf{K}^{-1}\left(\mathbf{m}-\boldsymbol{\muup}\right)\right)^2}\\
			&\cdot\left[-\left(1+\left(\mathbf{m}-\boldsymbol{\muup}\right)^T\mathbf{K}^{-1}\left(\mathbf{m}-\boldsymbol{\muup}\right)\right)\mathbf{K}^{-1}\mathbf{b}\right.\\
			&+\left(1+\left(\mathbf{m}-\boldsymbol{\muup}\right)^T\mathbf{K}^{-1}\left(\mathbf{m}-\boldsymbol{\muup}\right)-\left(\mathbf{d}-\boldsymbol{\muup}\right)^T\mathbf{K}^{-1}\left(\mathbf{m}-\boldsymbol{\muup}\right)\right)\mathbf{K}^{-1}\boldsymbol{\muup}\\
			&\left.+\left(\mathbf{d}-\boldsymbol{\muup}\right)^T\mathbf{K}^{-1}\left(\mathbf{m}-\boldsymbol{\muup}\right)\mathbf{K}^{-1}\mathbf{m}\right].
		\end{split}
		\label{g_dev}
	\end{equation} 
	
	$ g\left(\boldsymbol{\muup}\right) $ reaches its extremum when $ g\left(\boldsymbol{\muup}\right)'=\mathbf{0} $. To see this, let
	\begin{equation}
		1+\left(\mathbf{m}-\boldsymbol{\muup}\right)^T\mathbf{K}^{-1}\left(\mathbf{m}-\boldsymbol{\muup}\right)=\left(\mathbf{d}-\boldsymbol{\muup}\right)^T\mathbf{K}^{-1}\left(\mathbf{m}-\boldsymbol{\muup}\right),
		\label{ea4}
	\end{equation}
	or
	\begin{equation}
		\begin{split}
			&\left(1+\left(\mathbf{m}-\boldsymbol{\muup}\right)^T\mathbf{K}^{-1}\left(\mathbf{m}-\boldsymbol{\muup}\right)-\left(\mathbf{d}-\boldsymbol{\muup}\right)^T\mathbf{K}^{-1}\left(\mathbf{m}-\boldsymbol{\muup}\right)\right)\mathbf{K}^{-1}\boldsymbol{\muup}\\
			&=\left(1+\left(\mathbf{m}-\boldsymbol{\muup}\right)^T\mathbf{K}^{-1}\left(\mathbf{m}-\boldsymbol{\muup}\right)\right)\mathbf{K}^{-1}\mathbf{b}-\left(\mathbf{d}-\boldsymbol{\muup}\right)^T\mathbf{K}^{-1}\left(\mathbf{m}-\boldsymbol{\muup}\right)\mathbf{K}^{-1}\mathbf{m}.
		\end{split}
		\label{ea5}
	\end{equation} 
	
	If (\ref{ea5}) holds, then
	\begin{equation}
		\begin{split}
			&\boldsymbol{\muup}=\frac{\left(1+\left(\mathbf{m}-\boldsymbol{\muup}\right)^T\mathbf{K}^{-1}\left(\mathbf{m}-\boldsymbol{\muup}\right)\right)}{\left(1+\left(\mathbf{m}-\boldsymbol{\muup}\right)^T\mathbf{K}^{-1}\left(\mathbf{m}-\boldsymbol{\muup}\right)-\left(\mathbf{d}-\boldsymbol{\muup}\right)^T\mathbf{K}^{-1}\left(\mathbf{m}-\boldsymbol{\muup}\right)\right)}\mathbf{d}\\
			&-\frac{\left(\mathbf{d}-\boldsymbol{\muup}\right)^T\mathbf{K}^{-1}\left(\mathbf{m}-\boldsymbol{\muup}\right)}{\left(1+\left(\mathbf{m}-\boldsymbol{\muup}\right)^T\mathbf{K}^{-1}\left(\mathbf{m}-\boldsymbol{\muup}\right)-\left(\mathbf{d}-\boldsymbol{\muup}\right)^T\mathbf{K}^{-1}\left(\mathbf{m}-\boldsymbol{\muup}\right)\right)}\mathbf{m}
		\end{split}
		\label{mu}
	\end{equation}
	
	It is easy to observe that the sum of the two coefficients of $ \mathbf{d} $ and $ \mathbf{m}  $ is equal to 1. For simplification , set 
	\begin{equation*}
		\alpha=\frac{\left(1+\left(\mathbf{m}-\boldsymbol{\muup}\right)^T\mathbf{K}^{-1}\left(\mathbf{m}-\boldsymbol{\muup}\right)\right)}{\left(1+\left(\mathbf{m}-\boldsymbol{\muup}\right)^T\mathbf{K}^{-1}\left(\mathbf{m}-\boldsymbol{\muup}\right)-\left(\mathbf{d}-\boldsymbol{\muup}\right)^T\mathbf{K}^{-1}\left(\mathbf{m}-\boldsymbol{\muup}\right)\right)}.
	\end{equation*}
	
	Then, (\ref{mu}) can be rewritten as
	\begin{equation}
		\boldsymbol{\muup}=\alpha\mathbf{d}+\left(1-\alpha\right)\mathbf{m}.
		\label{mu2}
	\end{equation}
	
	It follows that 
	\begin{equation}
		\left(\mathbf{m}-\boldsymbol{\muup}\right)=\alpha\left(\mathbf{m}-\mathbf{d}\right),
		\label{m_mu}
	\end{equation}
	\begin{equation}
		\left(\mathbf{d}-\boldsymbol{\muup}\right)=\left(1-\alpha\right)\left(\mathbf{d}-\mathbf{m}\right).
		\label{d_mu}
	\end{equation}
	
	Substitute (\ref{mu2}), (\ref{m_mu}) and (\ref{d_mu}) into (\ref{mu}), then we can derive $ \boldsymbol{\muup} =\mathbf{d} $. In this case, the objective function $ g\left(\boldsymbol{\muup}\right) =0 $ and reaches the minimum. Yet, in this study, we only focus on the maximum point, so $ \boldsymbol{\muup} $ must satisfy (\ref{ea4}).
	By simplification, (\ref{ea4}) becomes
	\begin{equation}
		\left(\mathbf{d}-\mathbf{m}\right)^T\mathbf{K}^{-1}\left(\mathbf{m}-\boldsymbol{\muup}\right)=1
	\end{equation}
	which is the clever eye equation and therefore the theorem holds.
\end{proof}

\section{Proof of Theorem 2}
\begin{theorem}
	The CE detector is equivalent to the MF detector. That is, for any solution of \emph{(\ref{basicEq1})}, $ \boldsymbol{\muup}^* $, there exists a constant $ c $ such that 
	\begin{equation}
		\mathbf{R}_{\boldsymbol{\muup}^*}^{-1}\left(\mathbf{d}-\boldsymbol{\muup}^{*}\right)=c\mathbf{K}^{-1}\left(\mathbf{d}-\mathbf{m}\right)
		\label{Eq_CE_MF}
	\end{equation}
\end{theorem}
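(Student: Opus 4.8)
The plan is to exploit the rank-one structure of $\mathbf{R}_{\boldsymbol{\muup}}$ already recorded in (\ref{Ru}), namely $\mathbf{R}_{\boldsymbol{\muup}^*}=\mathbf{K}+(\mathbf{m}-\boldsymbol{\muup}^*)(\mathbf{m}-\boldsymbol{\muup}^*)^T$, and to invert it explicitly by the Sherman--Morrison formula. Writing $\mathbf{p}=\mathbf{m}-\boldsymbol{\muup}^*$ for brevity, this yields
\begin{equation}
	\mathbf{R}_{\boldsymbol{\muup}^*}^{-1}=\mathbf{K}^{-1}-\frac{\mathbf{K}^{-1}\mathbf{p}\,\mathbf{p}^T\mathbf{K}^{-1}}{1+\mathbf{p}^T\mathbf{K}^{-1}\mathbf{p}},
\end{equation}
which expresses everything in terms of $\mathbf{K}^{-1}$ alone and isolates the single scalar denominator that will have to be tamed.

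Next I would decompose the target-to-origin vector as $\mathbf{d}-\boldsymbol{\muup}^*=(\mathbf{d}-\mathbf{m})+\mathbf{p}$ and apply the inverse above to it. Multiplying out gives
\begin{equation}
	\mathbf{R}_{\boldsymbol{\muup}^*}^{-1}(\mathbf{d}-\boldsymbol{\muup}^*)=\mathbf{K}^{-1}(\mathbf{d}-\mathbf{m})+\mathbf{K}^{-1}\mathbf{p}-\frac{\mathbf{K}^{-1}\mathbf{p}\left(\mathbf{p}^T\mathbf{K}^{-1}(\mathbf{d}-\mathbf{m})+\mathbf{p}^T\mathbf{K}^{-1}\mathbf{p}\right)}{1+\mathbf{p}^T\mathbf{K}^{-1}\mathbf{p}}.
\end{equation}
The decisive step is to invoke the basic equation (\ref{basicEq1}), which in this notation reads $(\mathbf{d}-\mathbf{m})^T\mathbf{K}^{-1}\mathbf{p}=1$; by symmetry of $\mathbf{K}^{-1}$ the mixed scalar $\mathbf{p}^T\mathbf{K}^{-1}(\mathbf{d}-\mathbf{m})$ equals the same quantity, hence it too equals $1$. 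Substituting this into the numerator collapses the bracket to $1+\mathbf{p}^T\mathbf{K}^{-1}\mathbf{p}$, which exactly cancels the denominator, so the fraction reduces to $\mathbf{K}^{-1}\mathbf{p}$ and the last two terms annihilate. What survives is precisely $\mathbf{K}^{-1}(\mathbf{d}-\mathbf{m})$, establishing the identity (\ref{Eq_CE_MF}) with the explicit value $c=1$.

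I expect the only real obstacle to be bookkeeping rather than conceptual difficulty: one must apply Sherman--Morrison without sign or transpose slips, and one must recognise that the cross term $\mathbf{p}^T\mathbf{K}^{-1}(\mathbf{d}-\mathbf{m})$ is the same scalar appearing in (\ref{basicEq1}) so that the substitution is legitimate. The pleasant feature the calculation predicts is that the constant is not some $\boldsymbol{\muup}^*$-dependent factor but the universal value $c=1$; this is exactly why every solution of (\ref{basicEq1}) produces literally the same vector $\mathbf{K}^{-1}(\mathbf{d}-\mathbf{m})$, and therefore, after the common normalisation shown in Table~\ref{Tablecemcemf}, the identical MF detector. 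This in turn dissolves the apparent local-extremum worry for (\ref{g_mu}), since all its stationary maxima collapse to one and the same operator.
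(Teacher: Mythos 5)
Your proof is correct, and while it begins from the same tool as the paper (the Sherman--Morrison inversion of $\mathbf{R}_{\boldsymbol{\muup}^*}=\mathbf{K}+\mathbf{p}\mathbf{p}^T$ with $\mathbf{p}=\mathbf{m}-\boldsymbol{\muup}^*$), it then takes a genuinely more direct route. The paper does not use the hypothesis of the theorem in the form of (\ref{basicEq1}) right away: it substitutes the Sherman--Morrison expression into the claimed identity, separately rewrites the stationarity condition $g'\left(\boldsymbol{\muup}\right)=\mathbf{0}$ from the proof of Theorem~\ref{T1} to obtain an alternative expression for $\mathbf{K}^{-1}\left(\mathbf{d}-\boldsymbol{\muup}\right)$, and only at the end invokes Theorem~\ref{T1} to collapse the residual terms and identify $c=1$. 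You instead decompose $\mathbf{d}-\boldsymbol{\muup}^*=\left(\mathbf{d}-\mathbf{m}\right)+\mathbf{p}$, expand $\mathbf{R}_{\boldsymbol{\muup}^*}^{-1}\left(\mathbf{d}-\boldsymbol{\muup}^*\right)$ directly, and feed in $\mathbf{p}^T\mathbf{K}^{-1}\left(\mathbf{d}-\mathbf{m}\right)=1$ at the single place it is needed, so the rank-one correction exactly cancels the $\mathbf{K}^{-1}\mathbf{p}$ term. Your version is shorter, avoids re-deriving anything from the gradient, matches the theorem's stated hypothesis (``for any solution of (\ref{basicEq1})'') more faithfully, and makes the value $c=1$ transparent rather than something read off at the final step; the paper's version has the minor virtue of exhibiting explicitly how the stationarity of $g$ and the basic equation interlock, but at the cost of considerably heavier bookkeeping. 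I verified your expansion: the bracket $\mathbf{p}^T\mathbf{K}^{-1}\left(\mathbf{d}-\mathbf{m}\right)+\mathbf{p}^T\mathbf{K}^{-1}\mathbf{p}$ indeed becomes $1+\mathbf{p}^T\mathbf{K}^{-1}\mathbf{p}$ under (\ref{basicEq1}), so the cancellation is exact and no step is missing.
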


\begin{proof}
	Using Sherman--Morrison formula \cite{Sherman1949Adjustment}, the inverse matrix of $ \mathbf{R}_{\boldsymbol{\mu}} $ can be calculated by 
	\begin{equation}
		\mathbf{R}_{\boldsymbol{\muup}}^{-1}=\left(\mathbf{K}+\left(\mathbf{m}-\mathbf{\boldsymbol{\muup}}\right)\left(\mathbf{m}-\mathbf{\boldsymbol{\muup}}\right)^T\right)^{-1}=\mathbf{K}^{-1}-\frac{\mathbf{K}^{-1}\left(\mathbf{m}-\boldsymbol{\muup}\right)\left(\mathbf{m}-\boldsymbol{\muup}\right)^T\mathbf{K}^{-1}}{1+\left(\mathbf{m}-\boldsymbol{\muup}\right)^T\mathbf{K}^{-1}\left(\mathbf{m}-\boldsymbol{\muup}\right)}.
		\label{R_mu_1}
	\end{equation}
	
	Substitute (\ref{R_mu_1}) into (\ref{Eq_CE_MF}), then we have
	\begin{equation}
		\left(\mathbf{K}^{-1}-\frac{\mathbf{K}^{-1}\left(\mathbf{m}-\boldsymbol{\muup}\right)\left(\mathbf{m}-\boldsymbol{\muup}\right)^T\mathbf{K}^{-1}}{1+\left(\mathbf{m}-\boldsymbol{\muup}\right)^T\mathbf{K}^{-1}\left(\mathbf{m}-\boldsymbol{\muup}\right)}\right)\left(\mathbf{d}-\boldsymbol{\muup}\right)=c\mathbf{K}^{-1}\left(\mathbf{d}-\mathbf{m}\right),
	\end{equation}
	and so 
	\begin{equation}
		\mathbf{K}^{-1}\left(\mathbf{d}-\boldsymbol{\muup}\right)=\frac{\mathbf{K}^{-1}\left(\mathbf{m}-\boldsymbol{\muup}\right)\left(\mathbf{m}-\boldsymbol{\muup}\right)^T\mathbf{K}^{-1}}{1+\left(\mathbf{m}-\boldsymbol{\muup}\right)^T\mathbf{K}^{-1}\left(\mathbf{m}-\boldsymbol{\muup}\right)}\mathbf{K}^{-1}\left(\mathbf{d}-\boldsymbol{\muup}\right)+c\mathbf{K}^{-1}\left(\mathbf{d}-\mathbf{m}\right).
		\label{Eq_CE_MF_2}
	\end{equation}
	
	Apparently, to prove the theorem is equivalent to prove (\ref{Eq_CE_MF_2}). If $ \boldsymbol{\muup} $ satisfies $ g'\left(\boldsymbol{\muup}\right)=\mathbf{0} $, then according to (\ref{g_dev}) we have
	\begin{equation}
		\begin{split}	
			&\mathbf{K}^{-1}\left(\mathbf{d}-\boldsymbol{\muup}\right)=\\
			&\frac{\left(\mathbf{d}-\boldsymbol{\muup}\right)^T\mathbf{K}^{-1}\left(\mathbf{m}-\boldsymbol{\muup}\right)\left(1+\left(\mathbf{m}-\boldsymbol{\muup}\right)^T\mathbf{K}^{-1}\left(\mathbf{m}-\boldsymbol{\muup}\right)\right)\mathbf{K}^{-1}\left(-\mathbf{m}+2\boldsymbol{\muup}-\mathbf{d}\right)}{\left(1+\left(\mathbf{m}-\boldsymbol{\muup}\right)^T\mathbf{K}^{-1}\left(\mathbf{m}-\boldsymbol{\muup}\right)\right)^2}\\
			&+\frac{\left(\left(\mathbf{d}-\boldsymbol{\muup}\right)^T\mathbf{K}\left(\mathbf{m}-\boldsymbol{\muup}\right)\right)^2\mathbf{K}^{-1}\left(\mathbf{m}-\boldsymbol{\muup}\right)}{\left(1+\left(\mathbf{m}-\boldsymbol{\muup}\right)^T\mathbf{K}^{-1}\left(\mathbf{m}-\boldsymbol{\muup}\right)\right)^2},
		\end{split}
	\end{equation}
	which can be further turned into
	\begin{equation}
		\begin{split}
			\mathbf{K}^{-1}\left(\mathbf{d}-\boldsymbol{\muup}\right)&=\frac{\mathbf{K}^{-1}\left(\mathbf{m}-\boldsymbol{\muup}\right)\left(\mathbf{m}-\boldsymbol{\muup}\right)^T\mathbf{K}^{-1}}{1+\left(\mathbf{m}-\boldsymbol{\muup}\right)^T\mathbf{K}^{-1}\left(\mathbf{m}-\boldsymbol{\muup}\right)}\mathbf{K}^{-1}\left(\mathbf{d}-\boldsymbol{\muup}\right)\\
			&+\frac{\left(\mathbf{d}-\boldsymbol{\muup}\right)^T\mathbf{K}^{-1}\left(\mathbf{m}-\boldsymbol{\muup}\right)\mathbf{K}^{-1}\left(\mathbf{d}-\boldsymbol{\muup}\right)}{1+\left(\mathbf{m}-\boldsymbol{\muup}\right)^T\mathbf{K}^{-1}\left(\mathbf{m}-\boldsymbol{\muup}\right)}\\
			&-\frac{\left(\left(\mathbf{d}-\boldsymbol{\muup}\right)^T\mathbf{K}^{-1}\left(\mathbf{m}-\boldsymbol{\muup}\right)\right)^2\mathbf{K}^{-1}\left(\mathbf{m}-\boldsymbol{\muup}\right)}{\left(1+\left(\mathbf{m}-\boldsymbol{\muup}\right)^T\mathbf{K}^{-1}\left(\mathbf{m}-\boldsymbol{\muup}\right)\right)^2},	
		\end{split}
		\label{Eq_CE_MF_3}
	\end{equation}
	
	Considering (\ref{Eq_CE_MF_2}) and (\ref{Eq_CE_MF_4}), to prove the theorem we need to verify the equivalence of the right sides of both equations as follows,
	\begin{equation}
		\begin{split}	 
			&\frac{\left(\mathbf{d}-\boldsymbol{\muup}\right)^T\mathbf{K}^{-1}\left(\mathbf{m}-\boldsymbol{\muup}\right)\mathbf{K}^{-1}\left(\mathbf{d}-\boldsymbol{\muup}\right)}{1+\left(\mathbf{m}-\boldsymbol{\muup}\right)^T\mathbf{K}^{-1}\left(\mathbf{m}-\boldsymbol{\muup}\right)}\\
			&-\frac{\left(\left(\mathbf{d}-\boldsymbol{\muup}\right)^T\mathbf{K}^{-1}\left(\mathbf{m}-\boldsymbol{\muup}\right)\right)^2\mathbf{K}^{-1}\left(\mathbf{m}-\boldsymbol{\muup}\right)}{\left(1+\left(\mathbf{m}-\boldsymbol{\muup}\right)^T\mathbf{K}^{-1}\left(\mathbf{m}-\boldsymbol{\muup}\right)\right)^2}\\
			&=c\mathbf{K}^{-1}\left(\mathbf{d}-\mathbf{m}\right)
		\end{split}
		\label{Eq_CE_MF_4}
	\end{equation}
	
	By theorem \ref{theorem1}, (\ref{Eq_CE_MF_4}) can be simplified as
	\begin{equation}
		\mathbf{K}^{-1}\left(\mathbf{d}-\mathbf{m}\right)=c\mathbf{K}^{-1}\left(\mathbf{d}-\mathbf{m}\right)
		\label{Eq_CE_MF_5}
	\end{equation}
	
	(\ref{Eq_CE_MF_5}) holds when $ c=1 $. Thus CE is equivalent to MF. 
\end{proof}

\ifCLASSOPTIONcaptionsoff
  \newpage
\fi

\bibliographystyle{ieeetr}
\bibliography{Reference}

\end{document}